\newtheorem{theorem}{Theorem}[section]
\newtheorem{corollary}[theorem]{Corollary}
\newtheorem{lemma}[theorem]{Lemma}
\newtheorem{definition}[theorem]{Definition}
{\theorembodyfont{\rmfamily} \newtheorem{remark}[theorem]{Remark}
\newtheorem{example}{Example}}
\def\blksquare{\rule{2mm}{2mm}}
\def\qedsymbol{\blksquare}
\newcommand{\bg}[1]{\medskip\noindent{\it #1}}
\newcommand{\ed}{{\hfill\qedsymbol}\medskip}
\newenvironment{proof}{\bg{Proof : }}{\ed}
\newcommand{\np}{{\em NP}\xspace}
\newcommand{\nphard}{\np-hard\xspace}
\newcommand{\E}[2][{}]{\ensuremath{\mathrm{E}_{#1}\bigl[#2\bigr]}}
\newcommand{\R}{\ensuremath{\mathbb R}}
\newcommand{\Z}{\ensuremath{\mathbb Z}}
\newcommand{\A}{\ensuremath{\mathcal{A}}}
\newcommand{\B}{\ensuremath{\mathcal{B}}}
\newcommand{\G}{\ensuremath{\mathcal{G}}}
\newcommand{\I}{\ensuremath{\mathcal I}}
\newcommand{\F}{\ensuremath{\mathcal F}}
\newcommand{\D}{\ensuremath{\mathcal D}}
\newcommand{\T}{\ensuremath{\mathcal T}}
\newcommand{\Oc}{\ensuremath{\mathcal O}}
\newcommand{\OPT}{\ensuremath{\mathit{OPT}}}
\newcommand{\argmin}{\ensuremath{\mathrm{argmin}}}
\newcommand{\frall}{\ensuremath{\text{ for all }}}
\newcommand{\sm}{\ensuremath{\setminus}}
\newcommand{\es}{\ensuremath{\emptyset}}
\newcommand{\assign}{\ensuremath{\leftarrow}}
\newcommand{\e}{\ensuremath{\epsilon}}
\newcommand{\gm}{\ensuremath{\gamma}}
\newcommand{\sse}{\subseteq}
\newcommand{\ufl}{{\textsf{UFL}}\xspace}
\newcommand{\sufl}{{\small \textsf{UFL}}\xspace}
\newcommand{\vcp}{{\textsf{VC}}\xspace}
\newcommand{\svcp}{{\small \textsf{VC}}\xspace}
\newcommand{\mvcp}{{\textsf{Multi-VC}}\xspace}
\newcommand{\hy}{\ensuremath{\hat y}}
\newcommand{\hc}{\ensuremath{\hat c}}
\newcommand{\hht}{\ensuremath{\hat t}}
\newcommand{\ld}{\ensuremath{\lambda}}
\newcommand{\al}{\ensuremath{\alpha}}
\newcommand{\dt}{\ensuremath{\delta}}
\newcommand{\Dt}{\ensuremath{\Delta}}
\newcommand{\sg}{\ensuremath{\sigma}}
\newcommand{\pub}{\ensuremath{\mathit{pub}}}
\newcommand{\bff}{\ensuremath{\overline f}}
\newcommand{\bc}{\ensuremath{\overline c}}
\newcommand{\into}{\ensuremath{\mathrm{in}}}
\title{Truthful Mechanism Design for Multidimensional Covering Problems%
\footnote{A preliminary version appeared as~\cite{MinooeiS12}. Theorem 13
in~\cite{MinooeiS12} is incorrect; the correct statements appear as Theorem~\ref{mclosed}
and Corollary~\ref{mclosed-3hop} here.}}
\author{
         Hadi Minooei\thanks{{\tt\{hminooei,cswamy\}@math.uwaterloo.ca}.
         Dept. of Combinatorics and Optimization, Univ. Waterloo, Waterloo, ON N2L 3G1.
         Research supported partially by NSERC grant 327620-09 and the second author's
         Ontario Early Researcher Award.}
\and
         Chaitanya Swamy$^{\text{\thefootnote}}$
} 
\date{} 
\begin{document}

\maketitle

\vspace*{-3ex}

\begin{abstract}
We investigate {\em multidimensional covering mechanism-design} problems, wherein there
are $m$ items that need to be covered and $n$ agents who provide covering objects, with
each agent $i$ having a private cost for the covering objects he provides. The goal is to
select a set of covering objects of minimum total cost that together cover all the items.

We focus on two representative covering problems: uncapacitated facility location (\ufl)
and vertex cover (\vcp). For multidimensional \ufl, we give a black-box method to
transform any {\em Lagrangian-multiplier-preserving} $\rho$-approximation algorithm for
\ufl to a truthful-in-expectation, $\rho$-approx. mechanism. This yields the first
result for multidimensional \ufl, namely a truthful-in-expectation 2-approximation
mechanism. 

For multidimensional \vcp (\mvcp), we develop a {\em decomposition method} that reduces
the mechanism-design problem into the simpler task of constructing 
{\em threshold mechanisms}, which are a restricted class of truthful mechanisms, for 
simpler (in terms of graph structure or problem dimension) instances of \mvcp. By 
suitably designing the decomposition and the threshold mechanisms it uses as building
blocks, we obtain truthful mechanisms with the following approximation ratios ($n$ is the
number of nodes):  
(1) $O(r^2\log n)$ for $r$-dimensional \vcp; and
(2) $O(r\log n)$ for $r$-dimensional \vcp on any proper minor-closed family of graphs
(which improves to $O(\log n)$ if no two neighbors of a node belong to the same player).
These are the first truthful mechanisms for \mvcp with non-trivial approximation
guarantees.    
\end{abstract}

\section{Introduction} \label{intro}
Algorithmic mechanism design (AMD) deals with efficiently-computable algorithmic
constructions in the presence of strategic players who hold the inputs to the problem, and 
may misreport their input if doing so benefits them.
The challenge is to design algorithms that work well with the true (privately-known) input.   
In order to achieve this task, a {\em mechanism} specifies both an algorithm and a
pricing or payment scheme that can be used to incentivize players to reveal their true
inputs. A mechanism is said to be {\em truthful}, if each player maximizes his utility by
revealing his true input regardless of the other players' declarations. 

In this paper, we initiate a study of {\em multidimensional covering mechanism-design}
problems, often called {\em reverse auctions} or {\em procurement auctions} in the
mechanism-design literature. 
These can be abstractly stated as follows. There are $m$ items that need to be 
covered and $n$ agents who provide covering objects, with each agent $i$ having a private
cost for the covering objects he provides. The goal is to select (or buy) a suitable set
of covering objects from each player so that their union covers all the items, and the
total covering cost incurred is minimized. This {\em cost-minimization} (CM) problem is 
equivalent to the {\em social-welfare maximization} (SWM) (where the social welfare is 
$-$ (total cost incurred by the players and the mechanism designer)), so
ignoring computational efficiency, the classical VCG
mechanism~\cite{Vickrey61,Clarke71,Groves73} yields a truthful mechanism that always
returns an optimal solution. However, the CM problem is often \nphard, so we seek to
design a {\em polytime} truthful mechanism where the underlying algorithm returns a
near-optimal solution to the CM problem.  

Although multidimensional packing mechanism-design problems have received much attention
in the AMD literature, multidimensional covering CM problems are conspicuous by their
absence in the literature. 
For example, the packing SWM problem of combinatorial auctions has been studied (in
various flavors) in numerous works both from the viewpoint of designing polytime truthful,
approximation mechanisms~\cite{DobzinskiNS12,LaviS11,Dobzinski08,DughmiRY11},
and from the perspective of proving lower bounds on the capabilities of computationally-
(or query-) efficient truthful
mechanisms~\cite{LaviMN03,DughmiV11,DobzinskiV12}.  
In contrast, the lack of study of multidimensional covering CM problems is aptly
summarized by the blank table entry for results on truthful approximations for
procurement auctions in Fig. 11.2 in~\cite{agt} (a recent result of~\cite{BBShaddin} is an
exception; see ``Related work'').  
In fact, to our knowledge, the only multidimensional problem with a covering flavor that
has been studied in the AMD literature is the makespan-minimization problem on unrelated
machines~\cite{NisanR01,LaviS09,AshlagiDL09}, which is not an SWM problem. 
 
\vspace{-0.5ex}
\paragraph{Our results and techniques.}
We study two representative multidimensional covering problems, namely 
(metric) {\em uncapacitated facility location (\ufl)}, 
and {\em vertex cover} (\vcp), 
and develop  various techniques to devise polytime, truthful, approximation mechanisms for 
these problems. 

For multidimensional \ufl (Section~\ref{fl}), wherein players own (known) different
facility sets and the assignment costs are public, we present a {\em black-box reduction
from truthful mechanism design to algorithm design}.  
We show that any $\rho$-approximation algorithm for \ufl satisfying an additional 
{\em Lagrangian-multiplier-preserving} (LMP) property (that indeed holds for various
algorithms) can be converted in a black-box fashion to a truthful-in-expectation 
$\rho$-approximation mechanism (Theorem~\ref{flthm}). This is the {\em first} such
black-box reduction for a multidimensional covering problem, and it leads to 
the first result for multidimensional \ufl, namely, a
truthful-in-expectation, 2-approximation mechanism. 
Our result builds upon the convex-decomposition technique in~\cite{LaviS11}. Lavi and
Swamy~\cite{LaviS11} primarily focus on packing problems, but remark that their
convex-decomposition idea also yields results for {\em single-dimensional} covering
problems, and leave open the problem of obtaining results for multidimensional covering
problems. Our result for \ufl identifies an interesting property under which a
$\rho$-approximation algorithm for a covering problem can be transformed into a truthful,
$\rho$-approximation mechanism in the multidimensional setting. 

In Section~\ref{vc}, we consider multidimensional \vcp, where each player owns a
(known) set of nodes.  
Although, algorithmically, \vcp is one of the simplest covering problems, 
it becomes a surprisingly challenging mechanism-design problem in the 
{\em multidimensional} mechanism-design setting, and, in fact, seems significantly more  
difficult than multidimensional \ufl.
This is in stark contrast with the single-dimensional setting, where each player owns a
single node.    
Before detailing our results and techniques, we mention some of the difficulties
encountered. 
We use \mvcp to distinguish the multidimensional mechanism-design problem from the
algorithmic problem. 

For {\em single-dimensional} problems, a simple monotonicity condition characterizes the 
{\em implementability} of an algorithm, that is, whether it can be combined with 
suitable payments to obtain a truthful mechanism. This condition allows for ample
flexibility and various algorithm-design techniques can be leveraged to design
monotone algorithms for both covering and packing problems (see,
e.g.,~\cite{BriestKV05,LaviS11}). 
For single-dimensional \vcp, many of the known 2-approximation
algorithms for the algorithmic problem (based on LP-rounding, primal-dual methods, or
combinatorial methods) 
are either already monotone, or can be modified in simple ways so that they become
monotone, and thereby yield truthful 2-approximation mechanisms~\cite{DevanurMV05}.  
However, the underlying algorithm-design techniques fail to yield
algorithms satisfying {\em weak monotonicity} (WMON)---a necessary condition for
implementability (see Theorem~\ref{prelim})---even for the simplest multidimensional
setting, namely, 2-dimensional \vcp, where {\em every player owns at most two nodes}.
We show this for various LP-rounding methods in Appendix~\ref{lpround-bad}, and 
for primal-dual algorithms in Appendix~\ref{other-bad}. 

Furthermore, various techniques that have been devised for designing polytime truthful
mechanisms for multidimensional packing problems (such as combinatorial auctions) do not
seem to be helpful for \mvcp. For instance, the well-known technique of constructing a
{\em maximal-in-range}, or more generally, a {\em maximal-in-distributional-range} (MIDR) 
mechanism---fix some subset of outcomes and return the best outcome in this
set---does not work for \mvcp~\cite{BBShaddin} (and more generally, for multidimensional
covering problems). (More precisely, any algorithm for \mvcp whose range is a proper
subset of the collection of minimal vertex covers, cannot have bounded approximation
ratio.)  This also rules out the convex-decomposition technique of~\cite{LaviS11}, which
we exploit for multidimensional \ufl, because, as noted in~\cite{LaviS11}, this yields an
MIDR mechanism.  

Thus, we need to develop new techniques to attack \mvcp (and multidimensional covering
problems in general). We devise two main techniques for \mvcp. 
We introduce a simple class of truthful mechanisms called {\em threshold mechanisms}
(Section~\ref{thresh}), and show that despite their restrictions, 
threshold mechanisms can achieve non-trivial 
approximation guarantees. We next develop a {\em decomposition method} for \mvcp
(Section~\ref{decomp}) that provides a general way of reducing the mechanism-design
problem for \mvcp into simpler---either in terms of graph structure, or problem
dimension---mechanism-design problems by using threshold mechanisms as building
blocks. 
We believe that these techniques will also find use in other mechanism-design problems.   

By leveraging the decomposition method along with threshold mechanisms, we obtain various
truthful, approximation mechanisms for \mvcp, which yield the {\em first} truthful
mechanisms for multidimensional vertex cover with non-trivial approximation guarantees. 
Let $n$ be the number of nodes.
Our decomposition method shows that any instance of $r$-dimensional \vcp can be broken
up into $O(r^2\log n)$ instances of {\em single-dimensional \vcp}; this in turn leads to a
truthful, $O(r^2\log n)$-approximation mechanism for $r$-dimensional \vcp
(Theorem~\ref{rdim}). In particular, for any fixed $r$, we obtain an 
$O(\log n)$-approximation for any graph. 
We give another decomposition method that yields an improved truthful, 
$O(r\log n)$-approximation mechanism (Theorem~\ref{mclosed}) for any proper minor-closed
family of graphs (such as planar graphs). This guarantee improves to $O(\log n)$ for any
proper minor-closed family, when no two neighbors of a node belong to the same player.

It is worthwhile to note that in addition to their usefulness in the design
of truthful, approximation mechanisms for \mvcp, some of the mechanisms we design
also enjoy good frugality properties. 
We obtain (Theorem~\ref{frugality}) the {\em first} mechanisms for \mvcp that are
polytime, truthful and {\em simultaneously} achieve bounded approximation ratio {\em and}
bounded frugality ratio with respect to the benchmarks
in~\cite{ElkindFOCS,KempeFOCS}. This nicely complements a result of~\cite{ElkindFOCS}, 
who devise such a mechanism for single-dimensional \vcp. 

\paragraph{Related work.}
As mentioned earlier, there is little prior work on the CM problem for multidimensional
covering problems. 
Dughmi and Roughgarden~\cite{BBShaddin} give a general technique to convert an FPTAS for
an SWM problem to a truthful-in-expectation FPTAS. However, for covering problems, they
obtain an additive approximation, which does not translate to a (worst-case)
multiplicative approximation. In fact, as they observe, a multiplicative 
approximation ratio is impossible (in polytime) using their technique, or any other
technique that constructs a MIDR mechanism whose range is a proper subset of all
outcomes. 

For single-dimensional covering problems, various other results, including black-box
results, are known. 
Briest et al.~\cite{BriestKV05} consider a closely-related generalization, which one may
call the ``single-value setting''; although this is a multidimensional setting, it admits
a simple monotonicity condition sufficient for implementability, which makes this 
setting easier to deal with than our multidimensional settings.
They show that a pseudopolynomial time algorithm (for covering and packing problems)
can be converted into a truthful FPTAS. 
Lavi and Swamy~\cite{LaviS11} mainly consider packing problems, but mention that their
technique also yields results for single-dimensional covering problems.

Single-dimensional covering problems have been well studied from the perspective of 
{\em frugality}. Here the goal is to design mechanisms that have bounded (over-)payment
with respect to some benchmark, but one does not (typically) care about the cost of the
solution returned. Starting with the work of Archer and Tardos~\cite{ArcherT07},
various benchmarks for frugality have been proposed and investigated for various problems
including \vcp, $k$-edge-disjoint paths, spanning tree, $s$-$t$ cut; 
see~\cite{Karlin,Elkind1,KempeFOCS,ElkindFOCS} and the references therein. 
Some of our mechanisms for \mvcp are inspired by the constructions
in~\cite{KempeFOCS,ElkindFOCS}, and simultaneously achieve bounded approximation ratio and
bounded frugality ratio.

Our decomposition method, where we combine mechanisms for simpler problems into a
mechanism for the given problem, is somewhat in the same spirit as the construction
in~\cite{MualemN08}. They give a toolkit for combining truthful mechanisms, identifying
sufficient conditions under which this combination preserves truthfulness. But they work
only with the single-dimensional setting, which is much more tractable to deal with. 

Finally, as noted earlier, there are a wide variety of results on truthful
mechanism-design for packing SWM problems, such as combinatorial
auctions~\cite{DobzinskiNS12,LaviS11,Dobzinski08,DughmiRY11,LaviMN03,DughmiV11,DobzinskiV12}. 

\section{Preliminaries} \label{prelim}
In a {\it multidimensional covering mechanism-design problem}, we have $m$ items that
need to be covered, and $n$ agents/players who provide covering objects. Each agent $i$
provides a set $\T_i$ of covering objects. 
All this information is public knowledge. 
We use $[k]$ to denote the set $\{1,\ldots,k\}$. 
Each agent $i$ has a {\em private cost} (or type) vector $c_i=\{c_{i,v}\}_{v\in\T_i}$,
where $c_{i,v}$ is the cost he incurs for providing object $v\in\T_i$; 
for $T\sse\T_i$, we use $c_i(T)$ to denote $\sum_{v\in T}c_{i,v}$. 
A feasible solution or allocation selects a subset $T_i\sse\T_i$ for each agent $i$,  
denoting that $i$ provides the objects in $T_i$. 
Given this solution, each agent $i$ incurs the private cost $c_i(T_i)$. 
Also, the mechanism designer incurs a publicly-known cost $\pub(T_1,\ldots,T_n)$.
The goal is to minimize the total cost $\sum_i c_i(T_i)+\pub(T_1,\ldots,T_n)$ incurred. 
We call this the {\em cost minimization} (CM) problem. 
Note that we can encode any feasibility constraints in the covering problem by simply
setting $\pub(a)=\infty$ if $a$ is not a feasible allocation.  
Observe that if we view the mechanism designer also as a player, then 
the CM problem is equivalent to maximizing the social welfare, which is given
by $\sum_i-c_i(T_i)-\pub(T_1,\ldots,T_n)$. 

Various covering problems can be cast in the above framework. 
For example, in the mechanism-design version of {\em vertex cover} (Section~\ref{vc}), 
the items are edges of a graph. 
Each agent $i$ provides a subset $\T_i$ of the nodes of the graph 
and incurs a private cost $c_{i,v}$ if node $v\in T_i$ is used to cover an edge.  
We can set $\pub(T_1,\ldots,T_n)=0$ if $\bigcup_i T_i$ is a vertex cover, 
and $\infty$ otherwise, to encode that the solution must be a vertex cover.
It is also easy to see that the mechanism-design version of 
{\em uncapacitated facility location} (\ufl; Section~\ref{fl}), where each agent provides
some facilities and has private facility-opening costs, and the client-assignment costs
are public, can be modeled by letting $\pub(T_1,\ldots,T_n)$ be the total
client-assignment cost given the set $\bigcup_i T_i$ of open facilities.  

Let $C_i$ denote the set of all possible cost functions of agent $i$, and 
$\Oc$ be the (finite) set of all possible allocations. Let $C=\prod_{i=1}^n C_i$. 
For a tuple $x=(x_1,\ldots,x_n)$, we use $x_{-i}$ to denote
$(x_1,\ldots,x_{i-1},x_{i+1},\ldots,x_n)$. Similarly, let $C_{-i}=\prod_{j\neq i}C_j$.  
For an allocation $a=(T_1,\ldots,T_n)$, we sometimes use $a_i$ to
denote $T_i$, $c_i(a)$ to denote $c_i(a_i)=c_i(T_i)$. 
A (direct revelation) {\em mechanism} $M=({\cal A},p_1,\ldots,p_n)$ for a covering problem
consists of 
an allocation algorithm ${\cal A}:C\mapsto\Oc$ and a payment function $p_i:C\mapsto\R$ for
each agent $i$, and works as follows. Each agent $i$ reports a cost function $c_i$ (that
might be different from his true cost function). 
The mechanism 
computes the allocation $\A(c)=(T_1,\ldots,T_n)$, and pays $p_i(c)$ to each agent
$i$. Throughout, we use $\bc_i$ to denote the true cost 
function of $i$. The {\it utility} $u_i(c_i,c_{-i};\bc_i)$ that player $i$ derives when
he reports $c_i$ and the others report $c_{-i}$ is $p_i(c)-\bc_i(T_i)$, and each agent $i$
aims to maximize his own utility (rather than the social welfare). 

A desirable property for a mechanism to satisfy is {\em truthfulness}, wherein
every agent $i$ maximizes his utility by reporting his true cost function.
All our mechanisms will also satisfy the natural property of 
{\em individual rationality} (IR), which 
means that every agent has nonnegative utility if he reports his true cost.  

\begin{definition}
A mechanism $M=\bigl({\cal A},\{p_i\}\bigr)$ is truthful if for every agent $i$, every 
$c_{-i}\in C_{-i}$, and every $\bc_i, c_i\in C_i$, we have 
$u_i(\bc_i,c_{-i};\bc_i)\geq u_i(c_i,c_{-i};\bc_i)$. 
$M$ is IR if for every $i$, every $\bc_i\in C_i$ and every 
$c_{-i}\in C_{-i}$, we have $u_i(\bc_i,c_{-i};\bc_i) \geq 0$. 
\end{definition}

To ensure that truthfulness and IR are compatible, we consider 
{\em monopoly-free} settings: for every player $i$, there is a feasible allocation $a$
(i.e., $\pub(a)<\infty$) with $a_i=\es$. (Otherwise, if there is no such allocation, then
$i$ needs to be paid at least $\min_{v\in\T_i}c_{i,v}$ for IR, so he can lie and increase
his utility arbitrarily.) 

For a {\em randomized mechanism} $M$, where $\cal A$ or the $p_i$'s are randomized,
we say that $M$ is {\it truthful in expectation} if each agent $i$ maximizes his
expected utility by reporting his true cost. We now say that $M$ is 
IR if for {\em every} coin toss of the mechanism, the utility of each agent is nonnegative
upon bidding truthfully.  

Since the CM problem is often \nphard, 
our goal is to design a mechanism 
$M=\bigl(\A,\{p_i\}\bigr)$ that is truthful (or truthful in expectation), and where $\A$
is a $\rho$-approximation algorithm; that is, for every input $c$, the solution $a=\A(c)$
satisfies
$\sum_i c_i(a)+\pub(a)\leq\rho\cdot\min_{b\in\Oc}\bigl(\sum_i c_i(b)+\pub(b)\bigr)$.  
We call such a mechanism a {\em truthful, $\rho$-approximation mechanism}.

The following theorem gives a necessary and sometimes
sufficient condition for when an algorithm $\A$ is {\em implementable}, that is, admits
suitable payment functions $\{p_i\}$ such that $\bigl(\A,\{p_i\}\bigr)$ is a truthful
mechanism. Say that $\cal A$ satisfies {\it weak monotonicity} (WMON) if for all $i$, all
$c_i,c'_i\in C_i$, and all $c_{-i}\in C_{-i}$, 
if ${\cal A}(c_i,c_{-i})=a$, ${\cal A}(c'_i,c_{-i})=b$, then 
$c_i(a)-c_i(b)\leq c'_i(a)-c'_i(b)$. 
Define the dimension of a covering problem to be 
$\max_i|\T_i|$. 
It is easy to see that 
for a single-dimensional covering problem---so $C_i\sse\R$ for all $i$---WMON is
equivalent to the following simpler condition: say that $\A$ is 
{\em monotone} if for all $i$, all $c_i, c'_i\in C_i,\ c_i\leq c'_i$, and all
$c_{-i}\in C_{-i}$, \nolinebreak
\mbox
{if $\A(c_i,c_{-i})=a,\ \A(c'_i,c_{-i})=b$ then $b_i\sse a_i$.}

\begin{theorem}[Theorems 9.29 and 9.36 in~\cite{agt}] \label{wmon} 
If a mechanism $\bigl({\cal A},\{p_i\}\bigr)$ is truthful, then $\cal A$ satisfies WMON.  
Conversely, if the problem is single-dimensional, or if $C_i$ is convex for all $i$, then
every WMON algorithm $\cal A$ is implementable. 
\end{theorem}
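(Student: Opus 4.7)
The plan is to establish the two directions separately, with the converse splitting into a single-dimensional case and a convex case.

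For the forward direction (truthful $\Rightarrow$ WMON), I would fix a player $i$, types $c_i, c'_i \in C_i$, and declarations $c_{-i} \in C_{-i}$, and set $a = \A(c_i, c_{-i})$, $b = \A(c'_i, c_{-i})$. Applying truthfulness twice---once assuming $i$'s true type is $c_i$ (so $i$ weakly prefers to report $c_i$ rather than $c'_i$) and once assuming it is $c'_i$ (so $i$ weakly prefers $c'_i$ to $c_i$)---gives
\[
p_i(c_i,c_{-i}) - c_i(a) \geq p_i(c'_i,c_{-i}) - c_i(b), \qquad p_i(c'_i,c_{-i}) - c'_i(b) \geq p_i(c_i,c_{-i}) - c'_i(a).
\]
Summing the two inequalities cancels both payments and yields $c_i(a)-c_i(b) \leq c'_i(a)-c'_i(b)$, which is exactly WMON.

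For the single-dimensional converse, I would use the equivalent ``simple monotonicity'' characterization stated just above the theorem: as $c_i \in \R$ grows, $a_i$ can only shrink in set inclusion. Fixing $c_{-i}$ and following Myerson's construction, I would define, for the single object $v \in \T_i$, a threshold $\tau_i(c_{-i}) = \sup\{c_i : v \in \A(c_i,c_{-i})_i\}$ and set $p_i(c_i,c_{-i}) = \tau_i(c_{-i})$ when $v \in a_i$ and $0$ otherwise (with the natural extension when $|\T_i|>1$ but the problem is still single-dimensional in the sense that $C_i \subseteq \R$, e.g.\ by scaling a fixed cost vector). A short case analysis on whether misreporting changes $a_i$ then verifies truthfulness and IR.

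For the convex-domain converse, the argument is substantially harder. I would invoke the Saks--Yu theorem: on a convex type space, WMON is equivalent to \emph{cycle monotonicity}, i.e.\ for every finite cycle of reports $c_i^{(0)}, c_i^{(1)}, \ldots, c_i^{(k)} = c_i^{(0)}$ with resulting allocations $a^{(j)} = \A(c_i^{(j)}, c_{-i})$, one has $\sum_{j=0}^{k-1}\bigl(c_i^{(j+1)}(a^{(j+1)}) - c_i^{(j+1)}(a^{(j)})\bigr) \geq 0$. Cycle monotonicity in turn is classically equivalent to implementability: fix a reference type $c_i^{0}$, define $p_i(c_i,c_{-i})$ as the infimum of $\sum_{j} \bigl(c_i^{(j+1)}(a^{(j)}) - c_i^{(j)}(a^{(j)})\bigr)$ over finite paths from $c_i^{0}$ to $c_i$, and cycle monotonicity guarantees this is well-defined. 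One then checks directly from the path definition that $p_i(c_i,c_{-i}) - c_i(\A(c_i,c_{-i}))$ dominates $p_i(c'_i,c_{-i}) - c_i(\A(c'_i,c_{-i}))$, i.e.\ truthfulness.

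The main obstacle is the Saks--Yu step: WMON is literally the cycle-monotonicity condition for $2$-cycles only, and lifting it to arbitrary cycles genuinely requires the convexity of $C_i$. The standard proof segments any putative bad $k$-cycle along the straight-line convex combinations between consecutive reports, and uses the $2$-cycle inequality on infinitesimal pieces together with a compactness/continuity argument (the allocation takes values in the finite set $\Oc$, so only finitely many ``pieces'' occur along any segment) to collapse the $k$-cycle into a sum of $2$-cycle constraints. The forward direction and the single-dimensional converse, by contrast, are essentially a line of algebra and Myerson's classical payment formula.
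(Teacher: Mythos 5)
This theorem is stated in the paper as a citation of Theorems 9.29 and 9.36 of the AGT textbook~\cite{agt}; the paper itself supplies no proof, so there is no in-paper argument to compare against. Your reconstruction follows the standard textbook route, and the first two parts are correct and essentially complete: the forward direction is the familiar two-application-of-truthfulness-and-sum argument, and the single-dimensional converse via a threshold payment is Myerson's classical construction (and is simplified here by the fact that ``single-dimensional'' in this paper means $|\T_i|\le 1$, so $C_i\subseteq\R$ and $a_i$ is either $\emptyset$ or $\{v\}$; no ``scaling a fixed cost vector'' extension is needed). For the convex converse you correctly identify the two nontrivial ingredients: Rochet's result that cycle monotonicity is equivalent to implementability, and the Saks--Yu theorem that on a convex domain (with a finite outcome set $\Oc$) WMON implies cycle monotonicity.

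The one place where your sketch is loose is in the description of how Saks--Yu is actually proved. It is not a matter of ``collapsing a $k$-cycle into a sum of $2$-cycle constraints'' along segments; that phrasing suggests WMON can be telescoped directly, which it cannot. The actual argument builds the weighted allocation graph on $\Oc$ with edge weights $\ell(a,b)=\inf\{c_i(a)-c_i(b):\A(c_i,c_{-i})=b\}$, notes that WMON gives $\ell(a,b)+\ell(b,a)\ge 0$, and then uses convexity of $C_i$ together with finiteness of $\Oc$ in a more delicate way (essentially showing that a negative cycle would force a type on a connecting line segment that witnesses a violation of the two-cycle inequality) to rule out negative cycles. Since you are explicitly invoking Saks--Yu as a black box rather than re-proving it, this imprecision does not undermine the overall plan, but the parenthetical gloss of ``the standard proof'' should either be dropped or replaced with an accurate summary.
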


\section{A black-box reduction for multidimensional metric \sufl} \label{fl} 
In this section, we consider the multidimensional metric 
{\em uncapacitated facility location} (\ufl) problem and present a 
{\em black-box reduction} from truthful mechanism design to algorithm design. We show
that any $\rho$-approximation algorithm for \ufl satisfying an additional property 
can be converted in a black-box fashion to a
truthful-in-expectation $\rho$-approximation mechanism (Theorem~\ref{flthm}). 
This is the first such result for a multidimensional covering problem. 
As a corollary, we obtain a truthful-in-expectation, 2-approximation mechanism 
(Corollary~\ref{flresult}).

In the mechanism-design version of \ufl, we have a set $\D$ of clients that
need to be serviced by facilities, and a set $\F$ of locations where facilities may be
opened. Each agent $i$ may provide facilities at the locations in $\T_i\sse\F$. By making
multiple copies of a location if necessary, we may assume that the $\T_i$s are
disjoint. Hence, we will simply say ``facility $\ell$'' to refer to the facility at
location $\ell\in\F$. 
For each facility $\ell\in\T_i$ that is opened, $i$ incurs a private opening cost of
$\bff_{i,\ell}$, and assigning client $j$ to an open facility $\ell$ incurs a publicly
known assignment/connection cost $c_{\ell j}$. To simplify notation, given a tuple
$\{f_{i,\ell}\}_{i\in[n],\ell\in\T_i}$ of facility costs, we use $f_\ell$ to denote
$f_{i,\ell}$ for $\ell\in\T_i$.  
The goal is to open a subset $F\sse\F$ of facilities, 
so as to minimize $\sum_{\ell\in F}\bff_{\ell}+\sum_{j\in\D}\min_{\ell\in F}c_{\ell j}$.
We will assume throughout that the $c_{\ell j}$s form a metric.
It will be notationally convenient to allow our algorithms to have the flexibility of
choosing the open facility $\sg(j)$ to which a client $j$ is assigned (instead of 
$\argmin_{\ell\in F}c_{\ell j}$); since assignment costs are public, this does not affect
truthfulness, and any approximation guarantee achieved also clearly holds when we drop
this flexibility.  

\newcommand{\flopt}{\ensuremath{OPT_{\eqref{ufl-p}}}}

We can formulate (metric) \ufl as an integer program, and relax the integrality
constraints to obtain the following LP. Throughout, we use $\ell$ to index facilities in
$\F$ and $j$ to index clients in $\D$. 
\begin{gather}
\min \quad \sum_\ell f_\ell y_\ell+\sum_{j, \ell}c_{\ell j}x_{\ell j}
\qquad \text{s.t.} \qquad
\sum_{\ell} x_{\ell j} \geq 1 \quad \forall j, \qquad
0\leq x_{\ell j} \leq y_\ell \leq 1 \quad \forall \ell, j. \tag{FL-P} \label {ufl-p}
\end{gather}
Here, $\{f_\ell\}_\ell=\{f_{i,\ell}\}_{i\in[n],\ell\in\T_i}$ is the vector of
reported facility costs. Variable $y_\ell$ denotes if facility $\ell$ is opened, and
$x_{\ell j}$ denotes if client $j$ is assigned to facility $\ell$; the constraints encode
that each client is assigned to a facility, and that this facility must be open.

Say that an algorithm $\A$ is a {\em Lagrangian multiplier preserving} (LMP)
$\rho$-approximation algorithm for \ufl if for every instance, it returns a solution
$\bigl(F,\{\sg(j)\}_{j\in\D}\bigr)$ such that 
$\rho\sum_{\ell\in F}f_\ell+\sum_{j}c_{\sg(j) j}\leq\rho\cdot\flopt$.
The main result of this section is the following black-box reduction. 

\begin{theorem}\label{main theorem 3} \label{flthm}
Given a polytime, LMP $\rho$-approximation algorithm $\cal A$ for \ufl, one can construct
a polytime, truthful-in-expectation, individually rational, $\rho$-approximation mechanism
$M$ for multidimensional \ufl. 
\end{theorem}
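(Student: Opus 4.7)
The plan is to adapt the convex-decomposition framework of Lavi--Swamy to multidimensional \ufl, with the LMP property of $\A$ driving the decomposition. On a reported input $f$, the mechanism will: (i) solve LP~\eqref{ufl-p} to get an optimal fractional $(x^*,y^*)$ of value $V_{LP}=\flopt$; (ii) construct a distribution $\{\lambda_k\}$ over integer \ufl solutions $\{(F_k,\sigma_k)\}$ such that the opening marginals match $y^*$, i.e.\ $\sum_k\lambda_k\mathbf{1}[\ell\in F_k]=y^*_\ell$ for every $\ell$, and the expected cost satisfies $\sum_k\lambda_k\cost(F_k,\sigma_k)\le \rho V_{LP}$; (iii) sample $k$ with probability $\lambda_k$ and output $(F_k,\sigma_k)$; and (iv) pay players via VCG applied to the fractional LP polytope, with per-realization randomization to upgrade expected IR to worst-case IR (using the monopoly-free assumption). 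Given such a decomposition, truthfulness-in-expectation is nearly immediate: the LP is solved exactly and its objective is linear in each $f_i$ with $C_i$ convex, so VCG-on-LP is truthful by Theorem~\ref{wmon}; and because the marginals match, each player $i$'s expected private cost under sampling, $\sum_{\ell\in\T_i}\bc_{i,\ell}\Pr[\ell\in F_k]$, equals the LP cost $\sum_{\ell\in\T_i}\bc_{i,\ell}y^*_\ell$, so expected utility in the randomized integer mechanism coincides with utility in the LP mechanism at every report. The $\rho$-approximation is immediate from the cost bound on the decomposition since $V_{LP}\le\OPT$.

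The heart of the argument is the existence and polytime construction of the decomposition in step (ii). Consider the LP with one variable $\lambda_\phi\ge 0$ per integer \ufl solution $\phi$: minimize $\sum_\phi\lambda_\phi\cost(\phi)$ subject to $\sum_\phi\lambda_\phi y_\ell(\phi)=y^*_\ell$ for all $\ell$ and $\sum_\phi\lambda_\phi=1$. By LP duality, its optimum is at most $\rho V_{LP}$ iff every dual-feasible pair $(\mu,\nu)$---one satisfying $\nu+\sum_\ell\mu_\ell y_\ell(\phi)\le\cost(\phi)$ for every integer $\phi$---also obeys $\nu+\sum_\ell\mu_\ell y^*_\ell\le\rho V_{LP}$. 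To establish this, view $\mu$ as a facility-cost subsidy and, after reducing to the case $0\le\mu\le f$ by truncation, apply $\A$ to the modified instance with facility costs $f'=f-\mu$. The LMP guarantee forces the returned integer solution to have ordinary \ufl cost at most $\rho V_{LP}(f')$ (the factor $\rho$ attached solely to facility costs is exactly what absorbs the assignment-cost residue), so dual feasibility yields $\nu\le\rho V_{LP}(f')$. Combining this with $V_{LP}(f')\le V_{LP}(f)-\sum_\ell\mu_\ell y^*_\ell$ (which follows from feasibility of $(x^*,y^*)$ in the modified LP) and the nonnegativity of $\mu$ yields $\nu+\sum_\ell\mu_\ell y^*_\ell\le\rho V_{LP}-(\rho-1)\sum_\ell\mu_\ell y^*_\ell\le\rho V_{LP}$, as required. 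For the polytime construction, run ellipsoid on the dual augmented with the now-redundant constraint $\nu+\sum_\ell y^*_\ell\mu_\ell\le\rho V_{LP}$; $\A$ itself supplies the separation oracle, since any violated dual constraint corresponds to an integer solution $\A$ can be made to output. Standard Carr--Vempala-style extraction then yields a $\lambda$-distribution with $\poly(n,m)$ support.

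The main obstacle will be the dual-bounding argument above, and specifically the truncation step that reduces an arbitrary dual vector $\mu\in\R^{\F}$ to a nonnegative vector coordinatewise bounded by $f$, so that $\A$ can legitimately be invoked on the nonnegative cost vector $f'=f-\mu$. One has to show that the capping $\mu_\ell\mapsto\min(\max(\mu_\ell,0),f_\ell)$ never increases the dual objective for a feasible $(\mu,\nu)$ (adjusting $\nu$ if needed), and that the two pieces of the bound---the inequality $V_{LP}(f-\mu)\le V_{LP}(f)-\sum_\ell\mu_\ell y^*_\ell$ and the LMP integer-cost bound---interact cleanly after truncation. A plain (non-LMP) $\rho$-approximation would only yield $\cost(\text{integer})\le \rho V_{LP}(f')+(\rho-1)\cdot(\text{assignment cost})$ with an unabsorbed residue; it is exactly the LMP form, $\rho\cdot(\text{facility cost})+(\text{assignment cost})\le\rho V_{LP}$, that closes the loop and is indispensable to the approach. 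Everything after step (ii)---sampling, VCG-on-LP payments, and the per-realization IR adjustment using monopoly-freeness---is then relatively standard.
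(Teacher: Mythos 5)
Your overall architecture---solve \eqref{ufl-p}, build a convex decomposition whose facility marginals equal $y^*$ and whose expected cost is at most $\rho\cdot\flopt$, sample from it, and pay scaled fractional-VCG prices---is the same as the paper's, and the truthfulness/IR/approximation part of your sketch is sound. The genuine gap is exactly the step you flagged: the dual-bounding and separation argument for your decomposition LP. Because the marginal constraints are equalities, the dual vector $\mu$ is unrestricted in sign, and your reduction to $0\le\mu\le f$ by capping is not valid. Writing $g(\mu)$ for the best attainable dual objective, i.e.\ $g(\mu)=\min_{\phi}\bigl(\cost_f(\phi)-\sum_\ell\mu_\ell y_\ell(\phi)\bigr)+\sum_\ell\mu_\ell y^*_\ell$, capping a coordinate with $\mu_\ell<0$ to $0$ changes $g$ by an amount that can be as large as $+|\mu_\ell|(1-y^*_\ell)$ in the wrong direction, so the bound proved for truncated duals does not imply it for general ones. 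A direct attempt fares no better: if you feed $\A$ the cost $f_\ell-\mu_\ell=f_\ell+|\mu_\ell|$ for such a facility, the comparison $\flopt(f',c)\le\sum_\ell f'_\ell y^*_\ell+\sum_{\ell,j}c_{\ell j}x^*_{\ell j}$ injects $\rho(f_\ell+|\mu_\ell|)y^*_\ell$ on the right while the dual objective can only pay $\rho f_\ell y^*_\ell+|\mu_\ell|y^*_\ell$, leaving an uncontrolled excess $(\rho-1)|\mu_\ell|y^*_\ell$; if instead you cap that facility's cost at $f_\ell$, the solution $\A$ returns may open it and its true dual-constraint coefficient exceeds what you controlled by $|\mu_\ell|$. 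Either way the separation oracle can fail on query points with negative coordinates, which is precisely what the ellipsoid-based extraction needs. A telltale sign is that in your own chain for $0\le\mu\le f$ the LMP property does no work---a plain LP-relative $\rho$-approximation already gives $\cost(\phi)\le\rho\,\flopt(f',c)$---so the ingredient you correctly call indispensable is not being exploited where it matters.

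The paper's formulation is engineered to dodge exactly this: its decomposition LP maximizes $\sum_q\ld^{(q)}$ subject to exact marginals and a budget $\rho\sum_{\ell,j}c_{\ell j}x^*_{\ell j}$ on expected assignment cost, so the reported costs $f$ never appear in the dual constraints; the oracle runs $\A$ on facility costs $\alpha'^+_\ell/\rho$ and connection costs $\beta'c_{\ell j}$, where LMP is what allows the $1/\rho$ scaling of the facility part to be undone without touching the assignment part, and negative $\alpha'_\ell$ are neutralized by simply opening all such facilities (harmless there since their coefficient is $\alpha'_\ell\le 0$ and $y^*_\ell\le 1$). Your total-cost formulation is salvageable, but only by importing the same trick: run $\A$ on facility costs $(f_\ell-\mu_\ell)^+/\rho$ (not $f_\ell-\mu_\ell$), and augment the returned solution by opening every facility with $\mu_\ell>f_\ell$; then LMP gives $\sum_{\ell\in F}(f_\ell-\mu_\ell)^++\sum_j c_{\sg(j)j}\le\rho\,\flopt\bigl((f-\mu)^+/\rho,\,c\bigr)$ and a per-coordinate check closes all three sign cases ($\mu_\ell<0$, $0\le\mu_\ell\le f_\ell$, $\mu_\ell>f_\ell$), with LMP now genuinely needed to undo the $\rho$-division. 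As proposed, with $f'=f-\mu$ and the truncation ``WLOG,'' the negative-coordinate case is a real hole rather than a technicality.
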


\begin{proof}
We build upon the convex-decomposition idea used in~\cite{LaviS11}.
The randomized mechanism $M$ works as follows.
Let $f=\{f_\ell\}$ be the vector of reported facility-opening costs, and $c$ be the public
connection-cost metric. 

\begin{list}{{\bf\arabic{enumi}.}}{\usecounter{enumi} \topsep=0.2ex \itemsep=0.5ex
    \addtolength{\leftmargin}{-3ex}}
\item Compute the optimal solution $(y^*,x^*)$ to \eqref{ufl-p} (for the input $(f,c)$). 
Let $\{p^*_i=p^*_i(f)\}$ be the payments made by the {\em fractional} VCG mechanism that 
outputs the optimal LP solution for every input. 
That is, 
$p^*_i=\bigl(\sum_{\ell}f_{\ell}y'_{\ell}+\sum_{\ell,j}c_{\ell j}x'_{\ell j}\bigr)
-\bigl(\sum_{\ell\notin\T_i}f_{\ell}y^*_{\ell}+\sum_{\ell,j}c_{\ell j}x^*_{\ell j}\bigr)$,
where $(y',x')$ is the optimal solution to \eqref{ufl-p} with the additional constraints
$y_\ell=0$ for all $\ell\in\T_i$. 

\item Let $\Z(P)=\{(y^{(q)},x^{(q)})\}_{q\in\I}$ be the set of all integral solutions
to \eqref{ufl-p}. 
In Lemma~\ref{cvdecomp}, we prove the key technical result that using $\A$, one can compute, in
polynomial time, nonnegative multipliers $\{\lambda^{(q)}\}_{q\in\I}$
such that $\sum_q\lambda^{(q)}=1$,  
$\sum_q\lambda^{(q)}y_\ell^{(q)}=y^*_\ell$ for all $\ell$, and
$\sum_{q,\ell,j}\lambda^{(q)}c_{\ell j}x_{\ell j}^{(q)}\leq \rho\sum_{\ell,j}c_{\ell j}x^*_{\ell j}$.

\item With probability $\lambda^{(q)}$: (a) output the solution $\bigl(y^{(q)},x^{(q)}\bigr)$;
(b) pay $p_i^{(q)}$ to agent $i$, where $p_i^{(q)}=0$ if $\sum_{\ell\in\T_i}f_\ell y^*_\ell=0$, and  
$\sum_{\ell\in\T_i}f_\ell y^{(q)}_\ell\cdot\frac{p^*_i}{\sum_{\ell\in\T_i}f_\ell y^*_\ell}$ 
otherwise.  
\end{list}

Clearly, $M$ runs in polynomial time. 
Fix a player $i$. Let $\bff_i$ and $f_i$ be the true and reported cost vector of $i$. Let 
$f_{-i}$ be the reported cost vectors of the other players. 
Let $(y^*,x^*)$ be an optimal solution to \eqref{ufl-p} for $(f,c)$.
Note that $\E{p_i(f)}=p^*_i(f)$
If $\sum_{\ell\in\T_i}f_\ell y^*_\ell=0$ then this follows since $p^*_i(f)=0$ (because
then $(y^*,x^*)$ is also an optimal solution to \eqref{ufl-p} when player $i$ does not
participate). Otherwise, this follows since $\sum_q\ld^{(q)}y^{(q)}=y^*_\ell$ for all
$\ell$.  
So 
$\E{u_i(f_i,f_i;\bff_i)} = \E{p_i}-\sum_q\ld^{(q)}\sum_{\ell\in\T_i}\bff_\ell y^{(q)}_\ell 
= p^*_i(f)-\sum_{\ell\in\T_i}\bff_\ell y^*_\ell$
where the last equality is again because $\sum_q\ld^{(q)}y^{(q)}=y^*_\ell$ for all $\ell$. 
Since $p^*_i$ and $y^*$ are respectively the payment to $i$ and the assignment computed
for input $(f_i,f_{-i})$ by the fractional VCG mechanism, which is truthful, it
follows that player $i$ maximizes his utility in the VCG mechanism, and hence, his
expected utility under mechanism $M$, by reporting his true opening costs. 
Thus, $M$ is truthful in expectation.

This also implies the $\rho$-approximation guarantee because the convex decomposition
obtained in Step 2 shows that the expected cost of the solution computed by $M$ for input
$(f,c)$ (where we may assume that $f$ is the true cost vector)
is at most $\rho\cdot\flopt(f,c)$. 
Finally, since the fractional VCG mechanism is IR, for any agent $i$, the VCG payment
$p^*_i(f)$ satisfies $p^*_i(f)\geq \sum_{\ell\in\T_i}f_\ell y^*_\ell$, and therefore 
$p_i^{(q)}\geq\sum_{\ell\in\T_i}f_\ell y^{(q)}_\ell$. \nolinebreak
{So $M$ is IR.}
\end{proof}

\begin{lemma} \label{cvdecomp}
The convex decomposition in step 2 can be computed in polytime.
\end{lemma}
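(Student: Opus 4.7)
The plan is to express the convex decomposition as a linear program (P) whose variables $\{\lambda^{(q)}\}_{q\in\I}$ range over the exponentially many integral feasible solutions of \eqref{ufl-p}, and to solve (P) in polynomial time via the ellipsoid method using the LMP approximation $\A$ as the (approximate) separation oracle for the LP dual. Let $C^{(q)}:=\sum_{\ell,j}c_{\ell j}x^{(q)}_{\ell j}$ and $C^*:=\sum_{\ell,j}c_{\ell j}x^*_{\ell j}$. I would write (P) as $\min\sum_q\lambda^{(q)}C^{(q)}$ subject to $\sum_q\lambda^{(q)}=1$, $\sum_q\lambda^{(q)}y^{(q)}_\ell=y^*_\ell$ for every $\ell$, and $\lambda\ge 0$; the lemma is then equivalent to showing that the optimum of (P) is at most $\rho C^*$ and can be found in polytime. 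The feasibility of (P) (ignoring the cost bound) is already guaranteed since $y^*\in[0,1]^\F$ lies in the convex hull of the integral $y^{(q)}$'s; the content is in the cost bound.

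Taking the LP dual (D) yields polynomially many variables $\alpha\in\R$ and $\beta\in\R^\F$ (both unrestricted), one constraint $\alpha+\sum_\ell\beta_\ell y^{(q)}_\ell\le C^{(q)}$ per integral solution $q$, and objective $\max\alpha+\sum_\ell\beta_\ell y^*_\ell$. By strong LP duality, it suffices to show that (D) has optimum at most $\rho C^*$. I would run ellipsoid on (D): given a candidate $(\alpha,\beta)$, separation reduces to computing $\min_{q\in\I}[C^{(q)}-\beta^T y^{(q)}]$, which is the integral optimum of a UFL instance on the same clients and metric but with facility-opening cost $g_\ell:=-\beta_\ell$ at location $\ell$. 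I would invoke $\A$ on this modified instance---after first fixing open every facility $\ell$ with $\beta_\ell>0$, which is without loss of optimality in the minimization since opening such a facility only decreases both $C^{(q)}$ and the term $-\beta_\ell y^{(q)}_\ell$---and leverage the LMP guarantee together with the bound $\mathrm{LP}(g)\le g^T y^*+C^*$ (obtained by plugging in $(y^*,x^*)$) to either extract a cut violated by the current $(\alpha,\beta)$ in (D), or certify $\alpha+\beta^Ty^*\le\rho C^*$ for the current point.

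After polynomially many iterations the ellipsoid method terminates, producing a polynomial-size list $\I'\subseteq\I$ of integral solutions generated as oracle outputs. The restriction of (P) to $\lambda$ supported on $\I'$ is a polynomial-size LP which, by the oracle's guarantees, is feasible with optimum at most $\rho C^*$; solving it directly yields the multipliers $\{\lambda^{(q)}\}_{q\in\I'}$, extended by zero on $\I\setminus\I'$. The hardest step will be the design of the separation oracle: since $\beta$ has unrestricted sign and $\A$ is only an approximation algorithm, the oracle is necessarily approximate, and translating its output into either a true cut for (D) or a valid certificate of the $\rho C^*$ bound is where the full strength of the LMP property---which weights the facility cost by exactly $\rho$ in $\A$'s approximation inequality, rather than giving only a plain $\rho$-approximation---is essential, in tandem with the sign preprocessing that reduces the modified UFL to nonnegative facility costs compatible with $\A$.
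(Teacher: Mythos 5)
Your overall plan---posing the decomposition as an exponential-size LP over the integral solutions of \eqref{ufl-p}, solving its dual by the ellipsoid method with $\A$ as an approximate separation oracle, and recovering the multipliers from the compact primal supported on the violated constraints generated---is exactly the paper's route. Your formulation (cost in the objective, exact convex-combination constraints) versus the paper's (maximize $\sum_q\lambda^{(q)}$ with the cost bound $\rho C^*$ as a constraint, carrying a scalar dual variable) is an immaterial difference, and your trick of forcing open the facilities with $\beta_\ell>0$ mirrors the paper's $\hat y^{(q)}$ adjustment for negative dual values.

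However, the step you explicitly defer---how the LMP guarantee turns $\A$'s output into either a violated cut or the certificate $\alpha+\beta^{\tT}y^*\le\rho C^*$---is the entire content of the lemma, and as you specify it (invoke $\A$ with facility costs $g_\ell=-\beta_\ell$ and connection costs $c$) it does not go through. After your sign preprocessing the remaining costs $g$ are nonnegative, and LMP gives $\rho\sum_\ell g_\ell y^{(q)}_\ell+C^{(q)}\le\rho\cdot\mathrm{LP}(g,c)\le\rho\bigl(\sum_\ell g_\ell y^*_\ell+C^*\bigr)$; any rearrangement of this (dropping a factor on the left, or dividing through by $\rho$ first) still leaves a factor $\rho$ on the facility-side term $\sum_\ell g_\ell y^*_\ell$, so the best you obtain is $C^{(q)}-\beta^{\tT}y^{(q)}\le\rho C^*-\rho\beta^{\tT}y^*$, which (since $\beta_\ell\le 0$ on these coordinates) is strictly weaker than the bound $\rho C^*-\beta^{\tT}y^*$ you need; hence from $\alpha+\beta^{\tT}y^*>\rho C^*$ you cannot conclude that the returned solution violates its dual constraint, and the oracle produces neither a cut nor a certificate. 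The missing idea---the one the paper's proof turns on---is to call $\A$ with facility costs \emph{scaled down by $\rho$}: writing $a^+$ for $\max(0,a)$, use $f'_\ell=(-\beta_\ell)^+/\rho$ and connection costs $c$. Then LMP reads $\rho\sum_{\ell\in F}f'_\ell+\sum_j c_{\sigma(j)j}\le\rho\cdot\mathrm{LP}(f',c)$, i.e.\ $\sum_\ell(-\beta_\ell)^+y^{(q)}_\ell+C^{(q)}\le\sum_\ell(-\beta_\ell)^+y^*_\ell+\rho C^*$, so the factor $\rho$ survives only on the connection-cost term, exactly matching the target; adding the forced-open facilities (and using $y^*_\ell\le 1$) yields an integral solution $\hat y^{(q)}$ with the same assignment satisfying $C^{(q)}-\beta^{\tT}\hat y^{(q)}\le\rho C^*-\beta^{\tT}y^*$, which gives precisely your dichotomy. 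With that fix, the remainder of your argument (ellipsoid on the dual, then the polynomial-size primal over the generated columns) goes through as in the paper.
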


\begin{proof}
It suffices to show that the LP \eqref{primal} can be solved in polynomial time 
and its optimal value is $1$. Recall that $\{(y^{(q)},x^{(q)})\}_{q\in\I}$ is the set of
all integral solutions to \eqref{ufl-p}. 

\vspace{-3ex}

{\centering\small
\noindent \hspace*{-6ex}
\begin{minipage}[t]{.475\textwidth}
\begin{alignat}{2}
\max & \quad & \sum_q\lambda^{(q)} & \tag{P} \label{primal} \\
\text{s.t.} && \sum_q\lambda^{(q)} y_\ell^{(q)} & = y_\ell^* \qquad \forall \ell \label{1} \\
&& \sum_{j,\ell,q}\lambda^{(q)}c_{\ell j}x_{\ell j}^{(q)} & 
\leq \rho\sum_{j,\ell}c_{\ell j}x_{\ell j}^* \label{2} \\
&& \sum_q\ld^{(q)} & \leq 1 \label{3} \\ 
&& \lambda & \geq 0. \notag
\end{alignat}
\end{minipage}
\quad \rule[-31ex]{1pt}{28ex}\ 
\begin{minipage}[t]{.52\textwidth}
\begin{alignat}{2}
\min & \quad & \sum_\ell y_\ell^*\alpha_\ell 
+\bigl(\rho\sum_{j,\ell}c_{\ell j}x_{\ell j}^*\bigr)\beta & + z \tag{D} \label{dual} \\ 
\text{s.t.} && \sum_\ell y_\ell^{(q)}\alpha_\ell
+\bigl(\sum_{j,\ell}c_{\ell j}x_{\ell j}^{(q)}\bigr)\beta + z & \geq 1 \qquad \forall q 
\label{4} \\
&& z, \beta & \geq 0. \notag
\end{alignat}
\end{minipage}
}

\smallskip
Since \eqref{primal} has an exponential number of variables, we consider the dual 
\eqref{dual}.  
Here the $\alpha_\ell$s, $\beta$ and $z$ are the dual variables corresponding to  
constraints \eqref{1}, \eqref{2}, and \eqref{3} respectively.
Clearly, $\OPT_{\eqref{dual}}\leq 1$ since $z=1$, $\al_\ell=0=\beta$ for all
$\ell$ is a feasible dual solution. If there is a feasible dual solution
$(\alpha',\beta',z')$ of value smaller than 1, then the rough idea is that by running
$\cal A$ on the \ufl instance with facility costs $\{\frac{\alpha'_\ell}{\rho}\}$ and
connection costs $\{\beta' c_{\ell j}\}$, we can obtain an integral solution whose
constraint \eqref{4} is violated. (This idea needs be modified a bit since $\al'_\ell$
could be negative; see below.) 
Hence, we can solve \eqref{dual} efficiently via the ellipsoid method using $\A$ to
provide the separation oracle. This also yields an equivalent dual LP consisting of only
the polynomially many violated inequalities found during the ellipsoid method. The dual of
this compact LP gives an LP equivalent to \eqref{primal} with polynomially many
{$\ld^{(q)}$ variables whose solution yields the desired convex decomposition.}

We now fill in the details.
Suppose $(\alpha',\beta',z')$ is feasible to \eqref{dual} and
$\sum_\ell y_\ell^*\alpha'_\ell+(\rho \sum_{j,\ell}c_{\ell j}x_{\ell j}^*)\beta'+z'<1$.
Define $a^+:=\max(0,a)$; for a vector $v=(v_1,\ldots,v_n)$, define
$v^+:=(v_1^+,\ldots,v_n^+)$. 
Consider the \ufl instance with facility costs $\{f'_\ell=\alpha'^+_\ell/\rho\}$ and
connection costs $\{c'_{\ell j}=\beta' c_{\ell j}\}$. (Clearly $c'$ is also a metric.)
Running $\cal A$ on this input, we can obtain an integral solution $(y^{(q)},x^{(q)})$
such that  
$$
\rho\sum_\ell\tfrac{\alpha_\ell^{'+}}{\rho}y_\ell^{(q)}
+\sum_{j,\ell}\beta'c_{\ell j}x_{\ell j}^{(q)}
\leq \rho\cdot\flopt(f',c')
\leq \rho\Bigl(\sum_\ell\tfrac{\alpha_\ell^{'+}}{\rho}y_\ell^*
+\sum_{j,\ell}\beta'c_{\ell j}x_{\ell j}^*\Bigr).
$$
Clearly the facilities $\ell$ with $\al'_\ell\leq 0$ contribute 0 to the LHS and RHS of
the above inequality.
Now consider the integer solution $\hy^{(q)}$ where $\hat{y}_\ell^{(q)}$ is 1 if
$\alpha_\ell'\leq 0$ and is $y^*_\ell$ otherwise.  
Adding $\sum_{\ell:\al'_\ell\leq 0}\al'_\ell\hy^{(q)}_\ell$ to the LHS and
$\sum_{\ell:\al'_\ell\leq 0}\al'_\ell y^*_\ell$ to the RHS of the above inequality, 
since $y^*_\ell\leq 1$ for all $\ell$ and $\al_\ell^{'+}=\al'_\ell$ when $\al'_\ell>0$, we
infer that  
$$
\sum_\ell\alpha'_\ell\hy_\ell^{(q)}+\sum_{j,\ell}\beta'c_{\ell j}x_{\ell j}^{(q)}
\leq \sum_\ell\alpha'_\ell y_\ell^*+\bigl(\rho\sum_{j,\ell}c_{\ell j}x_{\ell j}^*\bigr)\beta'
<1-z'
$$
which contradicts that $(\alpha',\beta',z')$ is feasible to \eqref{dual}.
Hence, $\OPT_{\eqref{dual}}=\OPT_{\eqref{primal}}=1$.

Thus, we can add the constraint 
$\sum_\ell y_\ell^*\alpha_\ell+(\rho\sum_{j,\ell}c_{\ell j}x_{\ell j}^*)\beta + z\leq 1$ to
\eqref{dual} without altering anything. If we solve the resulting LP using the ellipsoid
method, and take the inequalities corresponding to the violated inequalities \eqref{4}
found by $\A$ during the ellipsoid method, then we obtain a compact LP with only a
polynomial number of constraints that is equivalent to \eqref{dual}. The dual of this
compact LP yields an LP equivalent to \eqref{primal} with a polynomial number of
$\ld^{(q)}$ variables which we can solve to obtain the desired convex decomposition. 
\end{proof}

By using the polytime LMP 2-approximation algorithm for \ufl devised by Jain et
al.~\cite{Jain}, we obtain the following corollary of Theorem~\ref{flthm}.

\begin{theorem} \label{flresult}
There is a polytime, IR, truthful-in-expectation, 2-approximation mechanism for
multidimensional \ufl. 
\end{theorem}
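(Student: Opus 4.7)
The plan is to invoke Theorem~\ref{flthm} as a black box, so the entire task reduces to exhibiting a polynomial-time LMP $\rho$-approximation algorithm for metric \ufl with $\rho = 2$. Jain, Mahdian, Markakis, Saberi, and Vazirani (``Jain et al.'' in the paper) analyze a natural dual-fitting / primal-dual algorithm (the JMS algorithm) for metric \ufl and prove that it returns an integral solution $(F,\{\sigma(j)\}_j)$ satisfying
\[
2\sum_{\ell\in F} f_\ell + \sum_{j\in\D} c_{\sigma(j),j} \;\leq\; 2\cdot\flopt,
\]
which is precisely the LMP 2-approximation guarantee required in the hypothesis of Theorem~\ref{flthm}. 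The algorithm runs in polynomial time (it is essentially a greedy growth of dual variables followed by a facility-opening phase), so both conditions of Theorem~\ref{flthm} are met.

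Given this, the proof is one line: instantiate the construction of Theorem~\ref{flthm} with $\A$ equal to Jain et al.'s LMP 2-approximation and $\rho = 2$. The theorem then directly produces a polytime, individually rational, truthful-in-expectation mechanism whose expected cost is at most $2\cdot\flopt(f,c) \le 2\cdot\OPT(f,c)$ on every instance, which is exactly the claim.

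There is no real obstacle here: the substantive work — the convex-decomposition argument that turns an LMP approximation into a truthful-in-expectation approximation mechanism preserving the ratio, together with the ellipsoid-based polytime computation of the decomposition via Lemma~\ref{cvdecomp} — has already been carried out in the proof of Theorem~\ref{flthm}. The only content of this corollary is the observation that an LMP 2-approximation for metric \ufl actually exists in the literature, which is supplied by~\cite{Jain}.
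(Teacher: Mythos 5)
Your proposal is correct and matches the paper's own argument exactly: the paper likewise derives this as an immediate corollary of Theorem~\ref{flthm} by plugging in the polytime LMP 2-approximation algorithm for metric \ufl due to Jain et al.~\cite{Jain}. Nothing further is needed.
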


\section{Truthful mechanisms for multidimensional \svcp} \label{vc} 
We now consider the multidimensional vertex-cover problem (\vcp), and devise various  
polytime, truthful, approximation mechanisms for it. We often use \mvcp to distinguish 
multidimensional \vcp from its algorithmic counterpart.  

Recall that in \mvcp, we have a graph $G=(V,E)$ with $n$ nodes. Each agent $i$
provides a subset $\T_i$ of nodes. For simplicity, we first assume that the $\T_i$s are
disjoint, and given a cost-vector $\{c_{i,u}\}_{i\in [n],u\in\T_i}$, we use $c_u$ to
denote $c_{i,u}$ for $u\in\T_i$. 
Notice that monopoly-free then means that each $\T_i$ is an independent set. 
In Remark~\ref{genvcp} we argue that many of the results obtained in this 
disjoint-$\T_i$s setting (in particular, Theorems~\ref{rdim} and~\ref{mclosed}) also hold
when the $\T_i$s are not disjoint (but each $\T_i$ is still an independent set).  
The goal is to choose a minimum-cost {\em vertex cover}, i.e., a min-cost set $S\sse V$
such that every edge is incident to a node in $S$.

As mentioned earlier, 
\vcp becomes a rather challenging mechanism-design problem in the {\em multidimensional}  
mechanism-design setting. 
Whereas for {\em single-dimensional \vcp}, many of the known 2-approximation algorithms
for \vcp are implementable,
none of these underlying techniques yield implementable algorithms even for the simplest
multidimensional setting, 2-dimensional \vcp, 
where {\em every player owns at most two nodes}; see Appendix~\ref{lpround-bad}
and~\ref{other-bad} for examples.
Moreover, no maximal-in-distributional-range (MIDR) mechanism whose range is a
proper subset of all outcomes can achieve a bounded multiplicative approximation
guarantee~\cite{BBShaddin}.%
\footnote{If $\A$ is a randomized MIDR algorithm and $S$ is an inclusion-wise minimal 
vertex cover such that the range of $\A$ does not include a distribution that
returns $S$ with probability 1, then $\A$ incurs non-zero cost on the instance where the
cost of a node $u$ is 0 if $u\in S$ and is 1 (say) otherwise, and so its approximation
ratio is unbounded.}
This also rules out the convex-decomposition technique of~\cite{LaviS11}, 
which yields MIDR mechanisms.
 
We develop two main techniques for \mvcp in this section. In
Section~\ref{thresh}, we introduce a simple class of truthful mechanisms called 
{\em threshold mechanisms}, and show that although seemingly restricted,
threshold mechanisms can achieve non-trivial 
approximation guarantees. In Section~\ref{decomp}, we develop a {\em decomposition method}
for \mvcp that uses threshold mechanisms as building blocks and gives a general way of
reducing the mechanism-design problem for \mvcp into simpler
mechanism-design problems. 

By leveraging the decomposition method along with threshold mechanisms, we obtain various
truthful, approximation mechanisms for \mvcp, which yield the {\em first} truthful
mechanisms for multidimensional vertex cover with non-trivial approximation guarantees. 
(1) We show that any instance of $r$-dimensional \vcp can be decomposed into 
$O(r^2\log n)$ {\em single-dimensional \vcp} instances; this leads to a 
truthful, $O(r^2\log n)$-approximation mechanism for $r$-dimensional \vcp
(Theorem~\ref{rdim}). In particular, for any fixed $r$, we obtain an 
$O(\log n)$-approximation. 
(2) For any proper minor-closed family of graphs (such as planar graphs), we obtain an 
improved truthful, $O(r\log n)$-approximation mechanism (Theorem~\ref{mclosed}); this
improves to an $O(\log n)$-approximation if no two neighbors of a node belong to the same
agent (Corollary~\ref{mclosed-3hop}).  

Theorem~\ref{frugality} shows that our mechanisms also enjoy good frugality properties.
We obtain the first mechanisms for \mvcp that are polytime, truthful, and
achieve bounded approximation ratio {\em and} bounded frugality ratio. 
This nicely complements a result of~\cite{ElkindFOCS}, 
{who devise such mechanisms for single-dimensional \vcp.}

\subsection{Threshold Mechanisms} \label{edge-threshold} \label{thresh}
\begin{definition} \label{threshold mechanism} \label{threshmech}
A {\em threshold mechanism} $M$ for \mvcp works as follows. On input $c$, for every $i$
and every node $u\in\T_i$, $M$ computes a threshold $t_u=t_u(c_{-i})$ (i.e., $t_u$
does not depend on $i$'s reported costs). $M$ then returns the solution 
$S=\{v\in V: c_v\leq t_v\}$ as the output, and pays $p_i=\sum_{u\in S\cap\T_i} t_u$ to
agent $i$. 
\end{definition}

\noindent
If $t_u$ only depends on the costs in the neighbor-set $N(u)$ of $u$, for all
$u\in V$ (note that $N(u)\cap\T_i=\es$ if $u\in\T_i$), we call $M$ a 
{\em neighbor-threshold mechanism}. A special case of a neighbor-threshold 
mechanism is an {\em edge-threshold mechanism}: for every edge $uv\in E$ we have edge
thresholds $t_u^{(uv)}=t_u^{(uv)}(c_v)$,  $t_v^{(uv)}=t_v^{(uv)}(c_u)$, and
the threshold of a node $u$ is given by $t_u=\max_{v\in N(u)}(t_u^{(uv)})$. 

In general, threshold mechanisms may not output a vertex cover, however it is easy to
argue that threshold mechanisms are always truthful and IR.

\begin{lemma} \label{threshold mech is truthful} \label{threshtruth}
Every threshold mechanism 
for \mvcp is IR and truthful. 
\end{lemma}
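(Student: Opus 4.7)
The plan is to prove truthfulness and individual rationality essentially by inspection, by writing out player $i$'s utility as a function of his reported cost vector and observing that this decomposes into a sum of per-node terms that are controlled solely by $i$'s own report, since the thresholds $t_u$ for $u\in\T_i$ depend only on $c_{-i}$.

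First I would fix an agent $i$ with true cost vector $\bc_i$, an arbitrary report $c_{-i}$ from the other players, and an arbitrary report $c_i$ from player $i$. Since $\T_i$ is disjoint from $\T_j$ for $j\neq i$, the set $S\cap\T_i$ returned by the mechanism equals $\{u\in\T_i : c_{i,u}\leq t_u\}$, where each $t_u=t_u(c_{-i})$ is independent of $c_i$. Plugging in the payment rule, player $i$'s utility is
\begin{equation*}
u_i(c_i,c_{-i};\bc_i) \;=\; \sum_{u\in S\cap\T_i} t_u \;-\; \sum_{u\in S\cap\T_i}\bc_{i,u}
\;=\; \sum_{u\in\T_i:\,c_{i,u}\leq t_u}(t_u-\bc_{i,u}).
\end{equation*}

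Next I would observe that because the thresholds $t_u$ do not depend on $c_i$, player $i$'s only lever is to decide, for each $u\in\T_i$ independently, whether to report $c_{i,u}\leq t_u$ (which includes $u$ in $S$ and contributes $t_u-\bc_{i,u}$ to the utility) or $c_{i,u}>t_u$ (which contributes $0$). The optimal choice is clearly to include $u$ precisely when $t_u-\bc_{i,u}\geq 0$, i.e., when $\bc_{i,u}\leq t_u$. Reporting truthfully ($c_i=\bc_i$) achieves exactly this optimal selection for every $u\in\T_i$ simultaneously, so $u_i(\bc_i,c_{-i};\bc_i)\geq u_i(c_i,c_{-i};\bc_i)$ for every alternative report $c_i$. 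This establishes truthfulness.

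Individual rationality then falls out of the same expression: when $i$ reports truthfully, every summand $t_u-\bc_{i,u}$ in $u_i(\bc_i,c_{-i};\bc_i)$ corresponds to a node $u$ with $\bc_{i,u}\leq t_u$ and is therefore nonnegative, so the total utility is at least $0$. There is no real obstacle here; the argument rests entirely on the decoupling enabled by the fact that $t_u$ does not depend on $i$'s report, and the proof is essentially a one-line check once this decomposition of the utility is written down.
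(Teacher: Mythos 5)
Your proof is correct and follows the same approach as the paper: write $u_i(c_i,c_{-i};\bc_i)=\sum_{u\in\T_i:\,c_{i,u}\leq t_u}(t_u-\bc_{i,u})$, note that the thresholds $t_u$ are independent of $c_i$, and conclude that truthful reporting maximizes each nonnegative-when-included summand, with IR an immediate byproduct. You spell out the per-node decoupling argument that the paper leaves implicit, but the substance is identical.
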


\begin{proof}
IR is immediate from the definition of payments.
To see truthfulness, fix an agent $i$. For every $\bc_i,c_i\in C_i, c_{-i}\in C_{-i}$ we have
$u_i(c_i,c_{-i};\bc_i)=\sum_{v\in\T_i: c_v\leq t_v}(t_v-\bc_v)$.
It follows that $i$'s utility is maximized by reporting $c_i=\bc_i$.
\end{proof}

Inspired by~\cite{KempeFOCS,ElkindFOCS}, we define an {\em $x$-scaled} edge-threshold
mechanism as follows: fix a vector $(x_u)_{u\in V}$, where $x_u>0$ for all $u$, and 
set $t_u^{(uv)}:=x_uc_v/x_v$ for every edge $(u,v)$. We abuse notation and use 
${\cal A}_x$ to denote both the resulting edge-threshold mechanism and its allocation
algorithm. Also, define $\B_x$ to be the neighbor-threshold mechanism where we set 
$t_u:=\sum_{v\in N(u)}x_uc_v/x_v$. 
Define $\alpha(G; x):=
\max_{u\in V}\bigl(\max_{S\subseteq N(u): S\text{ independent}}\frac{x(S)}{x_u}\bigr)$.

\begin{lemma}
\label{x-scaled theorem} 
${\cal A}_x$ and $\B_x$ output feasible solutions and have a tight 
approximation ratio $\alpha(G;x)+1$.  
\end{lemma}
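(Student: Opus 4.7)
The plan is to handle the three assertions—feasibility, the approximation bound $\alpha(G;x)+1$, and tightness—separately, reducing both $\A_x$ and $\B_x$ to essentially the same charging argument.

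For feasibility of $\A_x$, I would fix any edge $uv$ and suppose for contradiction that neither endpoint is selected. Then $c_u > t_u \geq x_u c_v/x_v$ and symmetrically $c_v > x_v c_u/x_u$. Multiplying—after disposing of the trivial case where some $c$-value is $0$, which forces the corresponding node into $S$ since thresholds are nonnegative—gives $c_u c_v > c_u c_v$, a contradiction. Feasibility of $\B_x$ comes for free because its threshold $\sum_{v \in N(u)} x_u c_v/x_v$ dominates $\A_x$'s threshold $\max_{v \in N(u)} x_u c_v/x_v$, so every node selected by $\A_x$ is also selected by $\B_x$.

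For the upper bound, fix an optimal vertex cover $\OPT$ and let $S$ be the mechanism's output. Split $c(S) = c(S \cap \OPT) + c(S \sm \OPT)$; the first piece is at most $c(\OPT)$. The key observation is that if $u \in S \sm \OPT$ then $N(u) \sse \OPT$, since every edge at $u$ must be covered by $\OPT$. For $\A_x$, I pick $v(u) \in N(u)$ achieving the max in $t_u$, so $c_u \leq x_u c_{v(u)}/x_{v(u)}$, and swap the order of summation to get $c(S \sm \OPT) \leq \sum_{w \in \OPT} (c_w/x_w)\cdot x\bigl(\{u \in S \sm \OPT : v(u) = w\}\bigr)$. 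The crux is that for each $w \in \OPT$, the charged set sits inside $N(w) \sm \OPT$, which is \emph{independent} since $\OPT$ being a vertex cover forbids edges between non-$\OPT$ nodes; by the definition of $\alpha(G;x)$ the inner $x$-weight is at most $x_w\cdot\alpha(G;x)$. Thus $c(S \sm \OPT) \leq \alpha(G;x)\cdot c(\OPT)$, yielding the ratio. The same argument applies to $\B_x$ after replacing the single $v(u)$ by the full sum over $N(u)$; only neighbors in $\OPT$ contribute (for the same vertex-cover reason), and the rest is identical.

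For tightness, a star graph suffices: take $G = K_{1,k}$ with center $v$ and leaves $u_1,\ldots,u_k$, set $x \equiv 1$ so that $\alpha(G;x) = k$, and set $c \equiv 1$. Every edge-threshold then equals the corresponding cost exactly, so $\A_x$—and hence $\B_x$—selects all of $V$ at cost $k+1$, while $\OPT = \{v\}$ costs $1$, precisely matching $\alpha(G;x)+1$. The main obstacle is the structural step in the charging argument: recognizing that the debts routed to any single $w \in \OPT$ come from an independent subset of $N(w)$. Once one notices that $\OPT$ being a vertex cover forces $N(w) \sm \OPT$ to be independent, the definition of $\alpha(G;x)$ plugs in directly; everything else—feasibility, the reduction from $\B_x$ to $\A_x$, and the star construction—is routine bookkeeping.
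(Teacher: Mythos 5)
Your feasibility and upper-bound arguments are correct and essentially the same as the paper's: the paper also covers each edge by noting one of the two edge-threshold inequalities must hold, notes $\B_x$'s thresholds dominate $\A_x$'s, and bounds $c(S\sm S^*)$ by charging each $u\in S\sm S^*$ to its neighbors (all of which lie in $S^*$), then using that the nodes charged to a fixed $w\in S^*$ form an independent subset of $N(w)$, so their $x$-weight is at most $x_w\,\alpha(G;x)$. Your choice to run the charging for $\A_x$ via the maximizing neighbor $v(u)$ and then extend to $\B_x$ is only a cosmetic variant (the paper argues for $\B_x$ and lets $\A_x$ inherit the bound since its output is a subset).

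The gap is in the tightness claim. The lemma asserts that for the \emph{given} graph $G$ and the \emph{given} scaling vector $x$, the ratio $\alpha(G;x)+1$ is attained (up to the obvious limits) by some cost vector on $G$ itself; this matters because the mechanisms of interest use specific $x$ (e.g.\ $x=\vec{1}$ or the Perron eigenvector) on a specific $G$. Your star construction $K_{1,k}$ with $x\equiv 1$, $c\equiv 1$ only exhibits one family of instances where the bound is met, so it shows the bound cannot be improved as a function of $\alpha$ alone, but it does not show tightness for an arbitrary fixed pair $(G,x)$. The fix is the paper's Example~\ref{xscale}: let $u$ and an independent $S\sse N(u)$ attain the maximum in $\alpha(G;x)$, and set $c_u=x_u$, $c_v=x_v$ for $v\in S$, and $c_w=0$ otherwise. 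Then $\A_x$ (hence $\B_x$) selects $\{u\}\cup S$ at cost $x_u+x(S)=(1+\alpha(G;x))x_u$, while $V\sm S$ is a vertex cover of cost $x_u$, giving ratio at least $1+\alpha(G;x)$ on that very $(G,x)$.
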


\begin{proof}
Clearly, every node selected by ${\cal A}_x$ is also selected by $\B_x$. So it suffices to
show that $\A_x$ is feasible, and to show the approximation ratio for $\B_x$.
For any edge $(u,v)$, either $c_u\leq x_uc_v/x_v$ and $u$ is output, 
or $c_v \leq x_vc_u/x_u$ and $v$ is output. 
So $\A_x$ returns a vertex cover. 

Let $S$ be the output of $\B_x$ on input $c$, and let $S^*$ be a min-cost vertex
cover. We have 
$c(S)=c(S\cap S^*)+c(S\setminus S^*)\leq c(S^*)+\sum_{u\in S\setminus S^*}t_u
=c(S^*)+\sum_{u\in S\sm S^*}\sum_{v\in N(u)}x_uc_v/x_v$.
Note that $S\sm S^*$ is an independent set since $S^*$ is a vertex cover, so 
$\sum_{u\in S\sm S^*}\sum_{v\in N(u)}x_uc_v/x_v
\leq\sum_{v\in  S^*}\frac{c_v}{x_v}\sum_{u\in N(v)cap S^*}x_u
\leq\sum_{v\in  S^*}c_v\cdot\al(G;x)$. 
Hence $c(S)\leq(\alpha(G;x)+1)c(S^*)$.
The tightness of the approximation guarantee follows from Example~\ref{xscale} below.
\end{proof}

\begin{corollary} \label{xscmech}
(i) Setting
$x=\vec{1}$ gives $\alpha(G;x)\leq\Delta(G)$, 
which is the maximum degree of a node in $G$, so ${\cal A}_{\vec{1}}$ has approximation
ratio at most $\Delta(G)+1$. %

\noindent
(ii) Taking $x$ to be the eigenvector corresponding to the largest eigenvalue $\ld_{\max}$
of the adjacency matrix of $G$ ($x>0$ by the Perron-Frobenius theorem) gives
$\alpha(G;x)\leq\lambda_{max}$ (see~\cite{ElkindFOCS}), so ${\cal A}_{x}$ has
approximation ratio $\lambda_{max}+1$. 
\end{corollary}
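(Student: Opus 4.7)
The plan is to invoke Lemma~\ref{x-scaled theorem}, which reduces both parts to bounding the quantity $\alpha(G;x)$ for the two specified choices of $x$. Since Lemma~\ref{x-scaled theorem} already gives approximation ratio $\alpha(G;x)+1$ for both $\A_x$ and $\B_x$, all that remains is to estimate $\alpha(G;x)$ in each case.

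For part (i), take $x=\vec{1}$. Then $x_u=1$ and $x(S)=|S|$, so
\[
\alpha(G;\vec{1}) \;=\; \max_{u\in V}\ \max_{\substack{S\sse N(u)\\ S\text{ indep.}}}|S|
\;\leq\;\max_{u\in V}|N(u)| \;=\;\Dt(G),
\]
and Lemma~\ref{x-scaled theorem} yields the approximation ratio $\Dt(G)+1$ immediately.

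For part (ii), let $A$ be the adjacency matrix of $G$ and let $x>0$ satisfy $Ax=\ld_{\max}x$; the existence (and strict positivity) of such an $x$ is guaranteed by the Perron--Frobenius theorem applied to $A$ (assuming $G$ is connected; the disconnected case is handled component-by-component). For any node $u\in V$ and any $S\sse N(u)$ (one in fact does not even need the independence of $S$), since $x>0$,
\[
x(S)\;=\;\sum_{v\in S}x_v \;\leq\;\sum_{v\in N(u)}x_v \;=\;(Ax)_u \;=\;\ld_{\max}\,x_u.
\]
Dividing by $x_u>0$ and taking the max over $u$ and $S$ gives $\al(G;x)\leq\ld_{\max}$, and Lemma~\ref{x-scaled theorem} then yields approximation ratio $\ld_{\max}+1$.

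There is no substantive obstacle here: both bounds follow immediately from the definition of $\al(G;x)$, together with the Perron eigenvector property in part (ii). The only mild subtlety is that in part (ii) we never need to use that $S$ is independent, so the stated bound $\al(G;x)\leq\ld_{\max}$ actually holds with the weaker inner maximum over all $S\sse N(u)$; this is consistent with, and slightly stronger than, what is asserted.
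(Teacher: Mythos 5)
Your proof is correct and follows exactly the route the paper intends: the corollary is an immediate consequence of Lemma~\ref{x-scaled theorem}, with $\alpha(G;\vec{1})\leq\Dt(G)$ being obvious and $\alpha(G;x)\leq\ld_{\max}$ for the Perron eigenvector following from $x(S)\leq x(N(u))=(Ax)_u=\ld_{\max}x_u$ (the bound the paper attributes to~\cite{ElkindFOCS}). Your observations that independence of $S$ is not needed in part (ii) and that disconnected graphs are handled componentwise are correct minor refinements of the same argument.
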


\begin{example}[Tightness of approximation ratio of $\A_x$ and $\B_x$] \label{xscale}
Let $u$ and $S\subseteq N(u)$ achieve the maximum in the definition of $\alpha(G;x)$. 
Now consider the instance $(G,c)$ where $c_u=x_u$, $c_v=x_v$ for all $v\in S$ and $c_w=0$
for all $w\in V\setminus (\{u\}\cup S)$. 
The mechanism ${\cal A}_x$ will choose $\{u\}\cup S$ in the output, whereas $V\sm S$ is a 
vertex cover of cost $c_u=x_u$. So, ${\cal A}_x$ has approximation ratio at least
$\frac{x_u+x(S)}{x_u}=1+\alpha(G;x)$. 
\end{example}

Although neighbor-threshold mechanisms are more general than
edge-threshold mechanisms, Lemma~\ref{nbredgemap} (proved in Appendix~\ref{append-vc})
shows that this yields limited dividends in the approximation ratio.
Define 
$\al'(G)=\min_{\text{orientations of $G$}}
\bigl(\max_{u\in V, S\sse N^{\into}(u): S\text{ independent}}|S|\bigr)$, where 
$N^\into(u)=\{v\in N(u): (u,v)\text{ is directed into }u\}$. 
Note that $\al'(G)\leq\al(G;\vec{1})\leq\Dt(G)$. 
If $G=(V,E)$ is {\em everywhere $\gm$-sparse}, i.e., 
$|\{(u,v)\in E: u,v\in S\}|\leq \gm|S|$ for all $S\sse V$, then $\al'(G)\leq \gm$; this
follows from Hakimi's theorem~\cite{Hakimi}. 
A well-known result in graph theory states that for every proper family $\G$ of graphs
that is closed under taking minors (e.g., planar graphs), there is a constant $\gm$, such
that every $G\in\G$ is has at most $\gm|V(G)|$ edges~\cite{Mader67} (see also~\cite{Diestel},
Chapter 7, Exer. 20); since $\G$ is minor-closed, this also implies that $G$ is 
{\em everywhere} $\gm$-sparse, and hence $\al'(G)\leq\gm$ for all $G\in\G$. 

\begin{lemma} \label{nbredgemap}
A (feasible) neighbor-threshold mechanism $M$ for graph $G$ with approximation ratio 
$\rho$, yields an $O\bigl(\rho \log(\alpha'(G))\bigr)$-approximation edge-threshold
mechanism for $G$. This implies an approximation ratio of
(i) $O(\rho\log\gm)$ if $G$ is an everywhere $\gm$-sparse graph; 
(ii) $O(\rho)$ if $G$ belongs to a proper minor-closed family of graphs (where the
constant in the $O(.)$ depends on the graph family).
\end{lemma}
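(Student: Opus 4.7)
The plan is to transform the given neighbor-threshold mechanism $M$ into an edge-threshold mechanism $M'$ by fixing an orientation of $G$ that achieves $\al'(G)$ and building the edge thresholds from $M$'s threshold functions using a dyadic decomposition over $O(\log \al'(G))$ scales.

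First I would fix an orientation of $G$ such that for every node $u$, every independent set $I \sse N^{\into}(u)$ satisfies $|I| \leq \al'(G)$. For an everywhere $\gm$-sparse graph, Hakimi's theorem supplies such an orientation with in-degree at most $\gm$, giving $\al'(G) \leq \gm$; for a proper minor-closed family $\gm$ is an absolute constant by the Mader/sparsity fact quoted just before the lemma. Next I would construct the edge thresholds: for each directed edge $v\to u$ and each scale $k \in \{0,1,\ldots,\lceil \log \al'(G)\rceil\}$, define a component threshold $\tau^{(uv,k)}(c_v)$ by evaluating $M$'s neighbor threshold $t_u$ with $c_v$ on the coordinate $v$ and with each remaining in-neighbor of $u$ placed at a scale-$k$ profile that depends only on $c_v$ (for instance, at cost $c_v/2^k$), and with out-neighbors handled by feasibility. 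Set $\tau_u^{(uv)}(c_v) := \max_k \tau^{(uv,k)}(c_v)$, and handle the reverse direction of the edge trivially so that at least one endpoint's edge-threshold always covers the edge. The threshold of $u$ in $M'$ is then $t'_u := \max_{v\in N(u)} \tau_u^{(uv)}(c_v)$.

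For feasibility, given any cost vector $c$ and any $u$ included by $M$ (so $c_u \leq t_u(c_{N(u)})$), I would argue that at the scale $k$ matching the actual in-neighbor cost profile, there is some in-neighbor $v$ with $\tau^{(uv,k)}(c_v) \geq c_u$. Hence $M'$ also includes $u$, so the output $S'$ of $M'$ contains the output $S$ of $M$, and is a vertex cover. For the approximation ratio, let $S^*$ denote an optimal vertex cover, and partition $S'\sm S$ according to the scale $k$ at which a node was admitted. The scale-$k$ class, viewed through the fixed orientation, behaves like an independent subset of the in-neighborhoods of nodes in $S^*$, so the orientation property limits the charging multiplicity at each scale by $\al'(G)$; a standard charging argument then bounds the scale-$k$ excess cost by $O(\rho)\cdot c(S^*)$. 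Summing over the $O(\log \al'(G))$ scales yields $c(S') \leq O(\rho \log \al'(G))\cdot c(S^*)$.

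The main obstacle is designing the component thresholds $\tau^{(uv,k)}$ precisely enough that (a) every cost profile is ``captured'' at some scale, so feasibility is inherited from $M$, while (b) no single scale contributes more than an $O(\rho)$-factor blow-up to the cost. Property (b) is where the orientation property is essential: within a scale, the relevant nodes of $S'\sm S$ behave like an independent set in the in-neighborhoods of $S^*$, so the orientation bound $\al'(G)$ caps the charging multiplicity, and it is the splitting into $O(\log \al'(G))$ scales (rather than a single scale) that prevents the polynomial blow-up $\max\ $vs$\ \sum$ would otherwise incur. Implications (i) and (ii) are then immediate: (i) substitutes $\al'(G)\leq\gm$ for an everywhere $\gm$-sparse $G$; (ii) uses that $\gm$ is a constant for any proper minor-closed family, so $\log\gm$ is absorbed into the $O(\cdot)$, yielding an $O(\rho)$-approximation edge-threshold mechanism.
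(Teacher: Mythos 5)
There is a genuine gap here, and it lies in both halves of your plan. First, feasibility: your argument that $M'$ inherits feasibility from $M$ rests on the claim that the true neighborhood cost profile of $u$ is ``captured'' by one of your artificial single-parameter profiles $(c_v, c_v/2^k,\ldots)$, so that $S'\supseteq S$. A neighbor-threshold function $t_u(c_{N(u)})$ is an arbitrary function of the whole profile (it need not be monotone), so evaluating it at profiles determined by a single coordinate tells you nothing about its value at the actual profile; the containment $S'\supseteq S$ simply need not hold, and your fallback of handling ``the reverse direction of the edge trivially'' is not a construction---whatever reverse thresholds you install also contribute to the payment/cost and must be analyzed. Second, and more importantly, the approximation bound: you never use the hypothesis that $M$ is a $\rho$-approximation in any concrete way. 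The assertion that within each scale ``a standard charging argument bounds the scale-$k$ excess cost by $O(\rho)\cdot c(S^*)$'' is exactly the missing content; bounding the charging multiplicity by $\al'(G)$ per scale would at best give a factor $\al'(G)$, not $O(\rho)$, unless one has quantitative control on the threshold values themselves, which your construction does not provide.

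The paper's proof supplies precisely these two missing ingredients by a different route. For each ordered pair $(u,v)$ with $uv\in E$ it defines a critical value $x_v^{(uv)}$ as the infimum cost of $v$ (with all other costs zero) above which $t_u\geq 1$; the approximation ratio of $M$ on suitable two-valued instances shows $0<x_v^{(uv)}<\infty$. It then orients $G$ by the orientation achieving $\al'(G)$ and builds \emph{linear, reciprocal} edge-thresholds $t_v^{(uv)}(c_u)=x_v^{(uv)}c_u$ and $t_u^{(uv)}(c_v)=c_v/x_v^{(uv)}$, so feasibility is automatic (for every edge at least one endpoint meets its threshold), with no need to compare the outputs of $M'$ and $M$. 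The ratio is then proved by bounding, for each $u$ and each independent $S\sse N(u)$, the sum $\sum_{w\in S}t_w^{(uw)}(1)$: the out-arc part $\sum_{w}x_w^{(uw)}\leq\rho$ follows by running $M$ on one instance that sets all these neighbors near their critical values, and the in-arc part is a harmonic sum $\sum_{q}1/x_u^{(uv_q)}\leq\rho(\log|S_2|+1)$ obtained by sorting the critical values and invoking $M$'s ratio on instances with $c_u=x_u^{(uv_q)}$; here $|S_2|\leq\al'(G)$ is where the orientation enters, and this harmonic sum---not a dyadic scale decomposition---is the source of the $\log\al'(G)$ factor. To repair your write-up you would need to replace the scale construction with (or reduce it to) such per-edge critical values and exhibit the instances on which $M$'s $\rho$-approximation is invoked.
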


\begin{remark} \label{genvcp}
Any neighbor-threshold mechanism $M$ with approximation ratio $\rho$ that works 
under the disjoint-$\T_i$s assumption can be modified to yield a truthful,
$\rho$-approximation mechanism when we drop this assumption. 
Let $A_u=\{i: u\in\T_i\}$.
Set $\hat{c}_u=\min_{i\in A_u} c_{i,u}$ for each $u\in V$ and let $\hht_u$ be the
neighbor-threshold of $u$ for the input $\hc$. 
Note that $\hht_u$ depends only on $c_{-i}$ for every $i\in A_u$.
Set $t_u^i:=\min\{\hat{t}_u, \min_{j\neq i: u\in\T_j} c_{j,u}\}$ for all $i, u\in\T_i$.
Consider the threshold mechanism $M'$ with $\{t_u^i\}$ thresholds, where we use a fixed  
tie-breaking rule to ensure that we pick $u$ for at most one agent $i\in A_u$ with
$c_{i,u}=t_u^i$. Then the outputs of $M$ on $c$, and of $M'$ on input $\hc$ coincide. 
{Thus, $M'$ is a truthful, $\rho$-approximation mechanism.}
\end{remark}

\subsection{A decomposition method} \label{decomp} 
We now propose a general reduction method for \mvcp that uses threshold mechanisms as
building blocks to reduce the task of designing truthful mechanisms for \mvcp to the task
of designing threshold mechanisms for simpler (in terms of graph structure or the
dimensionality of the problem) \mvcp problems. This reduction is useful because designing
good threshold mechanisms appears to be a much more tractable task for \mvcp. 
By utilizing the threshold mechanisms designed in Section~\ref{thresh} in our
decomposition method, we obtain an $O(\log n)$-approximation mechanism for any proper 
minor-closed family of graphs, 
and an $O(r^2\log n)$-approximation mechanism for $r$-dimensional \vcp. 

A {\it decomposition mechanism} $M$ for $G=(V,E)$ is constructed as follows. 
\begin{list}{--}{\topsep=0ex \itemsep=0ex \addtolength{\leftmargin}{-1ex}}
\item Let $G_1,\ldots,G_k$ be subgraphs of $G$ such that $\bigcup_{q=1}^k E(G_q)=E$,
\item Let $M_1,\ldots,M_k$ be threshold mechanisms for $G_1,\ldots,G_k$ respectively. 
For any $v\in V$, let $t_v^q$ be $v$'s threshold in $M_q$ if $v\in V(G_i)$, and $0$
otherwise. 
\item Define $M$ to be the threshold mechanism obtained by setting the threshold for
each node $v$ to $t_v:=\max_{q=1,\ldots,k}(t_v^q)$ for any $v\in V$. The payments of $M$
are then as specified in Definition~\ref{threshmech}. Notice that if all the $M_i$s are
neighbor threshold mechanisms, then so is $M$.
\end{list}

\begin{lemma} \label{decomp theorem}
The decomposition mechanism $M$ described above is IR and truthful.
If $\rho_1,\ldots,\rho_k$ are the approximation ratios of $M_1,\ldots,M_k$
respectively, then $M$ has approximation ratio $\bigl(\sum_q\rho_q\bigr)$.
\end{lemma}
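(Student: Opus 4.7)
The plan is to handle the two assertions of the lemma separately. Truthfulness and IR will come essentially for free, once we observe that the combined mechanism $M$ is itself a threshold mechanism matching Definition~\ref{threshmech}. The approximation guarantee will follow by viewing $M$'s output as (essentially) $\bigcup_q S_q$, where $S_q$ is the output of the component mechanism $M_q$, and then comparing each $c(S_q)$ to the optimum of $G_q$, which in turn is dominated by the optimum of $G$.

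For the truthfulness/IR half, fix an agent $i$ and a node $v \in \T_i$. Applying Definition~\ref{threshmech} to each $M_q$, the threshold $t_v^q$ is a function of $c_{-i}$ alone: it is either the threshold set by $M_q$ for $v$ (when $v \in V(G_q)$), which by assumption depends only on $c_{-i}$, or it is $0$ (when $v \notin V(G_q)$). Hence $t_v = \max_q t_v^q$ depends only on $c_{-i}$, so $M$ fits Definition~\ref{threshmech}. Lemma~\ref{threshtruth} then yields both truthfulness and IR immediately, and the payment rule stated in Definition~\ref{threshmech} is exactly the one used by $M$.

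For the approximation ratio, let $S = \{v \in V : c_v \leq t_v\}$ be $M$'s output and $S_q = \{v \in V(G_q) : c_v \leq t_v^q\}$ be the output of $M_q$ on $G_q$. A quick case analysis shows that any $v \in S \setminus \bigcup_q S_q$ must satisfy $c_v \leq t_v^{q'} = 0$ for some $q'$ with $v \notin V(G_{q'})$, and hence $c_v = 0$; thus $c(S) = c\bigl(\bigcup_q S_q\bigr) \leq \sum_q c(S_q)$. Moreover $\bigcup_q S_q \subseteq S$, and since each $S_q$ is a vertex cover of $G_q$ (by feasibility of $M_q$) and $\bigcup_q E(G_q) = E$, the set $S$ is a vertex cover of $G$. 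Finally, let $S^*$ be an optimal vertex cover of $G$ and $\OPT_q$ the minimum-cost vertex cover of $G_q$. Since $S^* \cap V(G_q)$ is a vertex cover of $G_q$, we have $\OPT_q \leq c(S^* \cap V(G_q)) \leq c(S^*)$. Combining, $c(S) \leq \sum_q c(S_q) \leq \sum_q \rho_q\,\OPT_q \leq \bigl(\sum_q \rho_q\bigr) c(S^*)$, as claimed.

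There is no significant obstacle; the only mild subtlety is reconciling $S$ with $\bigcup_q S_q$ when costs may vanish, but as noted above any discrepancy involves only zero-cost nodes and therefore does not affect the cost bound.
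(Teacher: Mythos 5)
Your proof is correct and follows essentially the same route as the paper's: truthfulness and IR come from observing that $M$ is itself a threshold mechanism (so Lemma~\ref{threshtruth} applies), and the approximation bound comes from summing the component guarantees, using that the optimal cover of $G$ induces a cover of each $G_q$. Your extra care about reconciling $S$ with $\bigcup_q S_q$ via zero-cost nodes is a harmless refinement of the paper's terser "it is clear that $M$ outputs $\bigcup_q S_q$."
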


\begin{proof}
Since $M$ is a threshold mechanism, it is IR and truthful by Lemma~\ref{threshtruth}. 
The optimal vertex cover for $G$ induces a vertex cover for each subgraph $G_q$. So $M_q$
outputs a vertex cover $S_q$ of cost at most $\rho_q\cdot\OPT$, where $\OPT$ is the
optimal vertex-cover cost for $G$. It is clear that $M$ outputs $\bigcup_q S_q$, which has
cost at most $\bigl(\sum_q\rho_q\bigr)\cdot\OPT$. 
\end{proof}

\begin{theorem} \label{1 dim decomp} \label{rdim}
For any $r$-dimensional instance of \mvcp on $G=(V,E)$, one can obtain a  
polytime, $O(r^2\log|V|)$-approximation, decomposition mechanism, even when the $\T_i$s 
are not disjoint. 
\end{theorem}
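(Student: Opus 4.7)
The plan is to invoke the decomposition framework of Lemma~\ref{decomp theorem} by exhibiting $O(r^2\log|V|)$ subgraphs of $G$ that jointly cover $E$, each equipped with a truthful, $O(1)$-approximation threshold mechanism; summing the ratios then gives the claimed $O(r^2\log|V|)$ bound. The construction proceeds in two nested stages: a labeling stage that accounts for the $O(r^2)$ factor, and a cost-bucketing stage that accounts for the $O(\log|V|)$ factor.

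For the labeling stage, fix for each player $i$ an arbitrary enumeration $\sigma_i:[|\T_i|]\to\T_i$; since $|\T_i|\le r$, every edge $(u,w)\in E$ is assigned a label $(a,b)\in[r]\times[r]$ with $u=\sigma_i(a)$ and $w=\sigma_{i'}(b)$, and the subgraph $G_{a,b}$ retains only the edges with that label. Within $G_{a,b}$, each player contributes at most two nodes, effectively simplifying the per-player dimension. For the cost-bucketing stage, I would define, for each node $u$ owned by player $i$, a scale $s_u=2^{\lfloor\log_2\tau_u\rfloor}$, where $\tau_u$ is a neighbor-based statistic such as $\min_{v\in N(u)}c_v$. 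Crucially, $\tau_u$ depends only on $c_{-i}$ because $\T_i$ is an independent set, so the scale assignment is threshold-compatible. After clipping scales to a polynomially bounded range (at the cost of an $O(1)$ factor absorbed by preprocessing), there are $O(\log|V|)$ scale levels, and each $G_{a,b}$ is split accordingly into $O(\log|V|)$ sub-instances $G_{a,b,\kappa}$, where $G_{a,b,\kappa}$ retains only the nodes whose scale equals $\kappa$ together with the induced edges. Within each such sub-instance eligible costs are uniform up to a constant factor, so an $\A_{\vec{1}}$-type edge-threshold mechanism from Corollary~\ref{xscmech} yields an $O(1)$-approximation.

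The combined decomposition mechanism assigns each node $v$ the threshold $t_v=\max_{a,b,\kappa}t_v^{(a,b,\kappa)}$; since every component is a neighbor-threshold mechanism, so is the combination, and Lemma~\ref{threshold mech is truthful} gives truthfulness and IR. Lemma~\ref{decomp theorem} then sums the $O(r^2\log|V|)$ constant-factor ratios to the claimed $O(r^2\log|V|)$. Extension to non-disjoint $\T_i$'s follows verbatim from Remark~\ref{genvcp}, which applies to any neighbor-threshold mechanism. The main obstacle is that the cost-bucket split may place the two endpoints of an edge in different scale levels, so I would choose $\tau_u$ carefully: the statistic $\tau_u=\min_{v\in N(u)}c_v$ ensures that for any edge $(u,w)$ the cheaper endpoint's scale is at most the more expensive one's (up to the factor-of-two rounding), so at least one of the sub-mechanisms at the relevant scale level of $(a,b)$ selects a covering endpoint; verifying that this covers every edge and that the $\A_{\vec{1}}$-type analysis carries through cleanly across buckets is the most delicate part of the argument.
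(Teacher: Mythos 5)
Your route is genuinely different from the paper's, and it has concrete gaps that I do not see how to close. The most basic one is edge coverage: Lemma~\ref{decomp theorem} requires $\bigcup_q E(G_q)=E$, but your bucket $G_{a,b,\kappa}$ keeps only edges \emph{both} of whose endpoints have scale $\kappa$, and the scales of the two endpoints of an edge $(u,w)$ need bear no relation to each other under $\tau_u=\min_{v\in N(u)}c_v$ (for instance, $u$ may have another very cheap neighbor while every neighbor of $w$ is expensive), so such an edge lies in no bucket and the union of outputs need not be a vertex cover; your claim that ``the cheaper endpoint's scale is at most the more expensive one's'' is simply not a property of this statistic. Second, the bucketing is not threshold-compatible in the way you assert: whether $(u,w)$ is present at level $\kappa$ depends on $\tau_w$, which can depend on $c_u$ (since $u\in N(w)$), so the combined threshold $t_u=\max_{a,b,\kappa}t_u^{(a,b,\kappa)}$ can depend on $u$'s own report; the object you build is then not a threshold mechanism, and Lemmas~\ref{threshtruth} and~\ref{decomp theorem} no longer give truthfulness (the framework takes the subgraphs $G_q$ as fixed, cost-independent objects precisely to avoid this). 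Third, the per-bucket $O(1)$ ratio is unsupported: bucketing by the neighbor statistic does not make the node costs $c_u$ within a bucket uniform up to a constant (and bucketing by own cost is impossible, since a threshold cannot depend on the owner's report), so $\A_{\vec 1}$ on $G_{a,b,\kappa}$ only guarantees $\Delta+1$, which can be $\Omega(n)$. Note also that your label subgraphs $G_{a,b}$ are $2$-dimensional, not single-dimensional, so you cannot fall back on the one-dimensional fact that any truthful $2$-approximation is automatically a threshold mechanism.

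For contrast, the paper's proof keeps the decomposition entirely cost-independent and gets the $\log n$ from a covering process rather than from cost scales: repeatedly form an induced subgraph on a set containing one node chosen uniformly at random from each $\T_i$ (plus all unowned nodes); each edge has both endpoints present with probability at least $1/r^2$, so $O(r^2\log|V|)$ rounds suffice in expectation (and with high probability) to cover $E$. Each resulting subgraph is a \emph{single-dimensional} \vcp instance, so any truthful $2$-approximation mechanism for it is a threshold mechanism, and Lemma~\ref{decomp theorem} sums the ratios to $O(r^2\log|V|)$; the non-disjoint case is then handled by Remark~\ref{genvcp}. If you want to salvage your deterministic labeling idea, you would still need a cost-independent way to reduce each $G_{a,b}$ to instances admitting $O(1)$-ratio threshold mechanisms, which is exactly the step your scale-bucketing does not provide.
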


\begin{proof}
We decompose $G$ into single-dimensional subgraphs, by which we mean subgraphs that
contain at most one node from each $\T_i$.
Initialize $j=1$, $V_j=\es$. While, $\bigcup_{q=1}^{j-1} E(G_q)\neq E$, we do the
following: for every agent $i$, we pick one of the nodes of $\T_i$ uniformly at random and
add it to $V_j$. We also add all the nodes in $V\sm\bigl(\bigcup_{i=1}^n\T_i\bigr)$ to
$V_j$. Let $G_j$ be the induced subgraph on $V_j$; set $j\assign j+1$. 

For any edge $e=(u,v)\in E$, the probability that both $u, v$ appear in some subgraph
$G_j$ is at least $1/r^2$. 
So, the expected value of $|E\sm \bigcup_{q=1}^{j-1}E(G_q)|$ decreases by a factor of at
least $(1-1/r^2)$ with $j$. Hence, the expected number of subgraphs produced
above is $O\bigl(\frac{\log |E|}{\log (r^2/(r^2-1))}\bigr)=O(r^2\log |V|)$ (this also
holds with high probability). 
Each $G_j$ yields a single-dimensional \vcp instance (where a node may be owned by
multiple players). Any truthful mechanism for a 1D-problem is a threshold mechanism. So we
can use any truthful, 2-approximation mechanism for single-dimensional \vcp for the $G_j$s
and obtain an $O(r^2\log n)$-approximation for $r$-dimensional \vcp.  
\end{proof}

The following lemma shows that the decomposition obtained above into single-dimensional
subgraphs is essentially the best that can hope for, for $r=2$. 

\begin{lemma}
There are instances of $2$-dimensional \vcp that require $\Omega(\log|V(G)|)$
single-dimensional subgraphs in any decomposition of $G$.
\end{lemma}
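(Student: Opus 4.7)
The plan is to exhibit a $2$-dimensional instance of \mvcp on $2n$ vertices for which any single-dimensional decomposition requires at least $\log_2 n$ subgraphs. Specifically, let $G$ be obtained from the complete graph $K_{2n}$ by deleting a perfect matching; label the matching edges $a_ib_i$ for $i\in[n]$, and let player $i$ own $\T_i=\{a_i,b_i\}$. Each $\T_i$ is then an independent set, so this is a valid monopoly-free $2$-dimensional \mvcp instance, and $|V(G)|=2n$.

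Given any decomposition $G_1,\ldots,G_k$ of $G$ into single-dimensional subgraphs, the set $V(G_\ell)$ contains at most one of $a_i,b_i$ for each $i$, so I would encode $G_\ell$ by a function $f_\ell\colon[n]\to\{0,1,\bot\}$ with $f_\ell(i)=0$ if $a_i\in V(G_\ell)$, $f_\ell(i)=1$ if $b_i\in V(G_\ell)$, and $f_\ell(i)=\bot$ otherwise. For each player $i$, define $Z(i)=\{\ell:f_\ell(i)=0\}$ and $O(i)=\{\ell:f_\ell(i)=1\}$, which are disjoint subsets of $[k]$.

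The key step is to observe that for every ordered pair $i\ne j$, the edge $a_ib_j\in E$ must be covered by some $G_\ell$, which forces $a_i,b_j\in V(G_\ell)$, i.e.\ $\ell\in Z(i)\cap O(j)$; in particular $Z(i)\cap O(j)\ne\emptyset$. From this I would argue that the $n$ sets $Z(1),\ldots,Z(n)$ are pairwise distinct: if $Z(i)=Z(j)$ for some $i\ne j$, then $Z(i)\cap O(j)=Z(j)\cap O(j)=\emptyset$ (since $Z(j)$ and $O(j)$ are disjoint by construction), contradicting the coverage of $a_ib_j$. Since $Z(1),\ldots,Z(n)$ are $n$ distinct subsets of $[k]$, we obtain $n\le 2^k$, i.e.\ $k\ge\log_2 n=\Omega(\log|V(G)|)$.

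The only truly creative step is finding the right construction; once $G$ is chosen, the remainder is a clean encoding/pigeonhole argument, and I do not foresee any real obstacle. (One could alternatively invoke the Bollob\'as set-pair inequality on the pairs $\{(Z(i),O(i))\}_i$, which satisfy $Z(i)\cap O(i)=\emptyset$ and $Z(i)\cap O(j)\ne\emptyset$ for $i\ne j$, to get the slightly sharper bound $n\le\binom{k}{\lfloor k/2\rfloor}$, but the direct distinctness argument already suffices for $\Omega(\log n)$.)
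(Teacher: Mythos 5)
Your proof is correct. The instance you use---$K_{2n}$ minus a perfect matching, with player $i$ owning the matched pair $\{a_i,b_i\}$---is essentially the paper's graph $G^n$ (the paper keeps only the bipartite edges $(u_i,v_j)$, $i\neq j$, which are in fact the only edges your argument uses), so the construction is the same up to irrelevant extra edges. Where you genuinely diverge is in the counting step: the paper argues by induction on $n$, observing that after removing one single-dimensional subgraph $G_1$ with (padded) vertex set $\{u_1,\ldots,u_k,v_{k+1},\ldots,v_n\}$, the remaining subgraphs must separately contain decompositions of the two induced copies of $G^k$ and $G^{n-k}$, giving the recurrence $q_n\geq 1+\max(q_k,q_{n-k})$ and hence $q_n\geq 1+\log_2 n$. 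You instead encode each player $i$ by the disjoint index sets $Z(i),O(i)\subseteq[k]$ and note that coverage of the edge $a_ib_j$ forces $Z(i)\cap O(j)\neq\emptyset$, so the $Z(i)$'s are pairwise distinct and $n\leq 2^k$. Your route is non-inductive and arguably cleaner: it avoids the paper's slightly informal steps (padding $G_1$ to a full transversal, extracting decompositions of the two halves), and, as you observe, the Bollob\'as set-pair inequality sharpens it to $n\leq\binom{k}{\lfloor k/2\rfloor}$; the paper's recursion buys only a marginally better additive constant. Both yield the claimed $\Omega(\log|V(G)|)$ bound.
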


\begin{proof}
Define $G^n$ to be the bipartite graph with vertices $\{u_1,\ldots,u_n,v_1,\ldots,v_n\}$
and edges $\{(u_i,v_j): i\neq j\}$. Each agent $i=1,\ldots,n$ owns vertices $u_i$
and $v_i$. 

For $n=2$ the claim is obvious. Let $q_n$ be the minimum number of single-dimensional 
subgraphs needed to decompose $G^n$. Suppose the claim is true 
for all $j<n$ and we have decomposed $G^n$ into single-dimensional 
subgraphs $D=\{G_1,\ldots,G_{q_n}\}$. 
We may assume that $V(G_1)=\{u_1,\ldots,u_k,v_{k+1},\ldots,v_n\}$ (if $G_1$ has less  
than $n$ nodes, pad it with extra nodes). 
Let $H_1$ and $H_2$ be the subgraphs of $G$ induced by 
$\{u_1,\ldots,u_k,v_1,\ldots,v_k\}$ and $\{u_{k+1},\ldots,u_n,v_{k+1},\ldots,v_n\}$,  
respectively. The graphs in $D\setminus \{G_1\}$ must contain a decomposition of $H_1$ and 
a decomposition of $H_2$. So $q_n\geq 1+\max(q_{k},q_{n-k})$, and hence, by induction,
we obtain that $q_n\geq 1+(1+\log_2(n/2))=1+\log_2 n$. 
\end{proof}

Complementing Theorem~\ref{rdim}, 
we next present another decomposition mechanism that exploits the graph structure to
obtain an improved approximation guarantee. Given a graph $G=(V,E)$ and a set $S\sse V$,
we use $E[S]$ to denote the set of edges having both end points in $S$, and 
$N(S)=\{u\in V\sm S: \exists v\in S\text{ s.t. }(u,v)\in E\}$ to denote the neighbors of
$S$. Also, let $\dt(S,T)$ denote the set of edges of $G$ having one end point each in $S$
and $T$.
When we subscript a quantity (e.g., $\dt(S)$ or $N(S)$) with a specific graph, we are 
referring to the quantity in that specific graph. 

\begin{theorem} \label{sparse theorem} \label{mclosed}
If $G=(V,E)$ is everywhere $\gm$-sparse, then one can devise a polytime,
$O(\gm r\log |V|)$-approximation decomposition mechanism for $r$-dimensional \vcp on
$G$. Hence, there is a polytime, truthful, $O(r\log n)$-approximation mechanism for
$r$-dimensional \vcp on any proper minor-closed family of graphs. These guarantees also
hold when the $\T_i$s are not disjoint.  
\end{theorem}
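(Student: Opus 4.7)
The plan is to instantiate the decomposition framework of Lemma~\ref{decomp theorem} with threshold mechanisms from Section~\ref{thresh} applied to a sparsity-driven decomposition of $G$. I will first handle the disjoint-$\T_i$s case; the non-disjoint case is then obtained from Remark~\ref{genvcp}, which applies because the construction will be a (neighbor-)threshold mechanism assembled from smaller threshold mechanisms via the recipe of Section~\ref{decomp}---a property preserved by that recipe.

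The heart of the argument is to cover $E(G)$ by subgraphs $G_1,\ldots,G_k$ and build a threshold mechanism $M_q$ on each, aiming for $\sum_q\rho_q=O(\gm r\log|V|)$ where $\rho_q$ is the approximation ratio of $M_q$. Sparsity of $G$ enters through Hakimi's theorem (cited just before Lemma~\ref{nbredgemap}), which gives $\al'(G)\le\gm$ and degeneracy $O(\gm)$. Using a degeneracy ordering, the plan is to peel off vertex layers of geometrically decreasing size, producing only $O(\log|V|)$ subgraphs $G_q$; on each $G_q$, vertices from the current peeled layer have a bounded in-neighborhood independence number of $O(\gm)$. This structural control then makes it possible to construct a per-subgraph threshold mechanism either as the edge-threshold mechanism $\A_{\vec{1}}$ of Corollary~\ref{xscmech}(i) on the peeled layer's local edges, or as a neighbor-threshold mechanism converted to an edge-threshold one via Lemma~\ref{nbredgemap} (paying only an $O(1)$ factor for minor-closed $G_q$). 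An additional factor of $O(r)$ is paid inside the per-subgraph ratio because $G_q$ is multi-dimensional in its ownership, unlike the single-dimensional subgraphs of Theorem~\ref{rdim}. Summing the ratios via Lemma~\ref{decomp theorem} then yields a truthful, IR decomposition mechanism of approximation $O(\gm r\log|V|)$.

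The minor-closed corollary is immediate from a classical graph-theoretic fact cited just before Lemma~\ref{nbredgemap}: every $G$ in a proper minor-closed family $\G$ is everywhere $\gm$-sparse for a constant $\gm=\gm(\G)$, so the general bound collapses to a polytime, truthful $O(r\log n)$-approximation mechanism. Finally, since the mechanism constructed above is a threshold mechanism (being a decomposition of threshold mechanisms) and, by choosing the per-subgraph pieces appropriately, a neighbor-threshold mechanism, Remark~\ref{genvcp} lifts the result to the non-disjoint-$\T_i$s setting (with each $\T_i$ still an independent set by monopoly-freeness) with the same approximation guarantee.

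The main obstacle is replacing the $r^2$ factor in Theorem~\ref{rdim}'s decomposition by $\gm r$. The naive random single-dimensional decomposition of Theorem~\ref{rdim} does not use sparsity at all and pays $\Omega(r^2)$ rounds just to hit each edge. Exploiting the degeneracy/arboricity bound $\al'(G)\le\gm$ to cover $E(G)$ with only $O(\log|V|)$ sparsity-respecting subgraphs, and designing on each such (multi-dimensional) subgraph a threshold mechanism of ratio only $O(\gm r)$, is the key technical step; once this is in place, the rest of the theorem follows mechanically from Lemma~\ref{decomp theorem}, Lemma~\ref{nbredgemap}, the cited sparsity bound for minor-closed families, and Remark~\ref{genvcp}.
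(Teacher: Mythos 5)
Your high-level skeleton matches the paper's: peel the everywhere $\gm$-sparse graph into $O(\log|V|)$ layers of low-degree vertices, handle the induced low-degree parts with $\A_{\vec{1}}$ at ratio $O(\gm)$, sum ratios via Lemma~\ref{decomp theorem}, invoke the sparsity of proper minor-closed families, and use Remark~\ref{genvcp} for non-disjoint $\T_i$s. But there is a genuine gap exactly where the theorem's difficulty lies: the cross edges between a peeled layer $T_q$ and its not-yet-peeled neighbors $R_q$. There the $R$-side nodes can have arbitrarily high degree, so $\A_{\vec{1}}$ gives no $O(\gm)$ bound, and your only plan for these edges is the assertion that ``an additional factor of $O(r)$ is paid inside the per-subgraph ratio because $G_q$ is multi-dimensional.'' No mechanism achieving this is exhibited, and no lemma in the paper lets you trade multidimensionality for an $O(r)$ loss in the approximation ratio. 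The obstruction is not an approximation issue but a truthfulness one: the natural ``star'' rule on a bipartite piece (pick $u\in R$ if $c_u\le\sum_{v\in N(u)}c_v$, else pick $N(u)$) is only a \emph{valid} threshold mechanism if no agent owns both a node and another node in the same star, i.e., thresholds must not depend on an agent's own reported costs; paying a factor in the ratio cannot repair that. The paper's key step, which your proposal omits, is to collect the cross edges into one bipartite graph $B=(T\uplus R,F)$ (with $T$-side degree at most $4\gm$), then randomly decompose $B$ into $O(r\log n)$ subgraphs $Z^j$ by picking one node of $T\cap\T_i$ per agent per round---hitting each edge with probability at least $1/r$---so that in each $Z^j$ every $R$-node has at most one neighbor per agent; on each $Z^j$ the star rule is then a valid threshold mechanism, and a dual-scaling argument using the $4\gm$ degree bound gives ratio $O(\gm)$, for a total of $O(\gm r\log n)$.

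Two smaller inaccuracies: Lemma~\ref{nbredgemap} converts a ratio-$\rho$ neighbor-threshold mechanism into an edge-threshold one at cost $O(\log\al'(G))=O(\log\gm)$, which is $O(1)$ only when $\gm$ is a constant, and in any case it presupposes a good neighbor-threshold mechanism for the cross edges, which is what is missing; the paper does not use it in this proof. Also, your lifting to non-disjoint $\T_i$s via ``the whole construction is a neighbor-threshold mechanism'' does not cover the star mechanisms on the $Z^j$s, which are not neighbor-threshold mechanisms; the paper needs the separate observation that their thresholds depend only on costs of nodes within hop-distance two, owned by other agents, so the argument of Remark~\ref{genvcp} still applies.
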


\begin{proof}
Set $G=G_0=(V_0,E_0)$, and let $n_0=|V_0|$.
Since $|E_0|\leq\gm n_0$, there are at most $n_0/2$ nodes in $V_0$ with degree larger than
$4\gm$. Let $T_1=\{u\in V_0: \dt(u)\leq 4\gm\}$.
Let $H_1=\bigl(T_1,E[T_1]\bigr)$ be the subgraph of $G_0$ induced by $T_1$.
Also, consider the bipartite subgraph 
$B_1=\bigl(T_1\cup N_{G_0}(T_1),\dt_{G_0}(T_1,N_{G_0}(T_1))\bigr)$. 
Now, $G_1=G_0\sm T_1$ (i.e., we delete the nodes in $T_1$ and the edges incident to them
to obtain $G_1$) is also $\gm$-sparse. So, we can similarly find a subgraph $H_2$ 
that contains at least half of the nodes of $G_1$, and the bipartite subgraph $B_2$ of
$G_1$. Continuing this process, we obtain subgraphs $H_1,B_1,H_2,B_2,\ldots,H_k,B_k$ that
partition $G$, where for every $q$, each node of $H_q$ and each node on one of the sides
of $B_q$ has degree (in that subgraph) at most $4\gm$, and 
$|V(H_q)|\geq|V(G\sm (T_1\cup\ldots T_{q-1})|/2$. Hence, $k\leq\log n$. 
Using the (edge-threshold) mechanism $\A_{\vec{1}}$ defined in Corollary~\ref{xscmech},
for each $H_q$ subgraph gives a $(4\gm+1)$-approximation for each $H_q$.
Let $B_q=\bigl(T_q\cup R_q,F_q)$, where $R_q=N_{G_{q-1}}(T_q)$, and
$F_q=\dt_{G_{q-1}}(T_q,R_q)$. 

Let $T=\bigcup_q T_q$, $R=\bigcup_q R_q$. 
Note that a node $u$ could lie in $T\cap R$. We replace each such node $u\in T\cap R$ with
two distinct ``copies'' $u_1$ and $u_2$, and place $u_1$ in $T$ and $u_2$ in $R$. 
If $u\in\T_i$ for some player $i$, then we include both $u_1, u_2$ in $\T_i$, and set
$c_{i,u_1}=c_{i,u_2}=c_{i,u}$.   
The understanding is that if any of $u_1$ or $u_2$ is picked, then we pick $u$; in other
words, the threshold of $u$ is the maximum of the thresholds of $u_1$ and $u_2$.  
Let $T\uplus R$ denote the resulting set of nodes (with bipartition $T, R$).
We create a bipartite graph $B=(T\uplus R, F)$ representing the union of all the $B_q$s,
where $F$ is defined as follows. For notational simplicity, if a node $u$ is in exactly
one of $T$ and $R$ (so it has only one copy in $T\uplus R$), we set $u_1=u_2=u$.
For every $q=1,\ldots,k$, and every edge $(u,v)\in F_q$, where $u\in R_q$, $v\in T_q$, we
include the edge $(u_2,v_1)$ in $F$.
Note that:
(a) $B$ is bipartite; 
(b) the maximum degree of $T$ (in $B$) is at most $4\gm$; and, 
(c) every edge in $E\sm\bigcup_q E(H_q)$ maps to exactly one edge of $F$. 
We show that one can obtain an $O(r\gm\log n)$-approximation decomposition mechanism for
$B$. Thus, we obtain an $O(r\gm\log n)$-approximation decomposition mechanism for $G$.

We obtain $O(r\log n)$ bipartite graphs whose edges cover $F$, with the property that in
each resulting bipartite subgraph $Z$, for each node $u\in R\cap V(Z)$, and each agent
$i$, {\em at most one} of $u$'s neighbors in $Z$ is in $\T_i$. 
We use a procedure similar to that in the proof of Theorem~\ref{rdim}.  
For each $i$, we pick one node from $T\cap \T_i$ uniformly at random; let $X$ be the set
of nodes picked from $T$. We create the bipartite graph $Z^j$ consisting of all edges
between $X$ and $N_B(X)$. We increment $j$ and continue this process until all edges of
$F$ have been covered. Since the probability that an edge $(u,v)\in F$ is covered in an
iteration is at least $\frac{1}{r}$, $O(r\log n)$ subgraphs suffice, in expectation and
with high probability, to cover $F$.  

Now, for each bipartite graph $Z^j$ with bipartition $X^j\cup Y^j$, where 
$X^j\sse T,\ Y^j\sse R$, we use the following threshold mechanism.  
Assume for now that the $\T_i$s are disjoint, and set $c_u=c_{i,u}$ if $u\in\T_i$.
For each $u\in Y^j$, we pick $u$ if $c_u\leq\sum_{v\in N_{Z^j}(u)}c_v$, and we pick
$N_{Z^j}(u)$ if $\sum_{v\in N_{Z^j}(u)}c_v\leq c_u$. Note that since 
$|X^j\cap\T_i|\leq 1$ for every $i$, 
this is a valid threshold mechanism. 
The cost of the solution $S$ output by this mechanism for $Z^j$ is at most 
$2\sum_{u\in Y^j}c(S^*_u)$, where $S^*_u$ is the optimal vertex cover for the star
consisting of $u$ and $N_{Z^j}(u)$. Since every node in $X^j$ has degree at most $4\gm$,
it is not hard to see that $\sum_{u\in Y^j}c(S^*_u)\leq 4\gm\cdot\OPT(Z^j)$, where
$\OPT(Z^j)$ is the value of an optimal vertex cover for $Z^j$. This follows since, for   
example, concatenating the optimal dual solutions corresponding to the $S^*_u$s and 
scaling by $4\gm$ yields a feasible solution to the dual of the vertex-cover LP for
$Z^j$. Therefore, the threshold mechanism for $Z^j$ is an $8\gm$-approximation, and hence
we obtain an $O(r\gm\log n)$-approximation for $B$. 

\medskip
If the $\T_i$s are not disjoint, then by Remark~\ref{genvcp}, the $O(\gm)$-approximation
for the $H_q$s still holds. When constructing $Z^j$, we set the ``owners'' of a node
$v\in T$ included in $Z^j$ to be all the agents $i$ who picked $v$ as the random node
from their $\T_i$-set (and hence caused $v$ to be included in $Z^j$); the owners of a node
$u\in Y^j$ are unchanged, that is, $\{i: u\in\T_i\}$.
Now, as in Remark~\ref{genvcp}, we can move from this to an instance where each node is
owned by at most one agent. Although the mechanism described above for $Z^j$ is 
{\em not} a neighbor-threshold mechanism, it is not hard to see that since the threshold
for a node $v\in T\cap V(Z^j)$ depends only on nodes that are at hop-distance at most
2 from $v$, none of which are owned by any agent owning $v$ in $Z^j$, the same reasoning
as in Remark~\ref{genvcp} shows that the $O(\gm)$-approximation threshold mechanism
obtained above for $Z^j$ holds even when a node is owned by multiple agents. Thus, we
still obtain an $O(\gm r\log |V|)$-approximation mechanism.

\medskip
As noted in Section~\ref{thresh}, every proper minor-closed family of graphs is everywhere
$\gm$-sparse for some $\gm>0$. Thus, the above result implies a truthful, 
$O(r\log^2 n)$-approximation for any proper minor-closed family (where the constant in the 
$O(.)$ depends on the graph family; e.g., for planar graphs $\gm\leq 4$).
\end{proof}

Given a graph $G=(V,E)$, define a {\em 3-hop-far} instance of \mvcp on $G$ to be one that
satisfies $|N(u)\cap\T_i|\leq 1$ for every $u\in V$ and every agent $i$; 
that is no two neighbors of a node are owned by the same agent. On such instances, one can
improve the guarantee of Theorem~\ref{mclosed} by removing the dependence on
$\max_i|\T_i|$. 

\begin{corollary} \label{mclosed-3hop}
Let $G=(V,E)$ be an everywhere $\gm$-sparse graph. One can devise a polytime 
$O(\gm\log |V|)$-approximation decomposition mechanism for 3-hop-far instances of \mvcp on
$G$. Hence, one obtains a polytime, truthful $O(\log n)$-approximation mechanism for 3-hop
far \mvcp on any proper minor-closed family of graphs. These guarantees also hold when the
$\T_i$s are not disjoint.
\end{corollary}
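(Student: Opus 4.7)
The plan is to mirror the proof of Theorem~\ref{mclosed} but exploit the 3-hop-far hypothesis to eliminate the random subsampling step that is responsible for the factor of $r$. First I would carry out the identical iterative peeling of low-degree vertices from $G$: starting with $G_0=G$, and using that $G$ is everywhere $\gm$-sparse, repeatedly extract $T_q=\{u\in V(G_{q-1}):\deg_{G_{q-1}}(u)\leq 4\gm\}$ (which contains at least half the remaining vertices), producing subgraphs $H_1,B_1,\ldots,H_k,B_k$ with $k=O(\log n)$. For each $H_q$ I would invoke the edge-threshold mechanism $\A_{\vec 1}$ of Corollary~\ref{xscmech}(i), which is an $O(\gm)$-approximation since the max degree in $H_q$ is at most $4\gm$. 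It then remains to give an $O(\gm)$-approximation threshold mechanism for the union bipartite graph $B=(T\uplus R,F)$ (with the usual copy-splitting of nodes lying in both $T$ and $R$); Lemma~\ref{decomp theorem} then assembles an $O(\gm\log n)$-approximation for $G$.

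For the bipartite stage I would apply the same threshold mechanism used in the proof of Theorem~\ref{mclosed}, but directly to $B$, with no subsampling: for $u\in R$ set $t_u=\sum_{v\in N_B(u)}c_v$, and for $v\in T$ set $t_v=\max_{u\in N_B(v)}\bigl(c_u-\sum_{w\in N_B(u)\sm\{v\}}c_w\bigr)$. The whole purpose of the subsampling in Theorem~\ref{mclosed} was to produce subgraphs $Z^j$ satisfying $|N_{Z^j}(u)\cap\T_i|\leq 1$ for every $u\in R$ and every $i$, at a cost of a factor $O(r)$; under the 3-hop-far hypothesis this property already holds for $B$ itself, since $|N_B(u)\cap\T_i|\leq|N_G(u)\cap\T_i|\leq 1$. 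The approximation analysis of Theorem~\ref{mclosed} then transfers verbatim: using that every $v\in T$ satisfies $\deg_B(v)\leq 4\gm$ (since $v$'s $B$-edges are confined to the single $B_q$ with $v\in T_q$), the standard LP-matching/bipartite-VC duality argument gives $\sum_{u\in R}c(S^*_u)\leq 4\gm\cdot\OPT(B)$, which combines with the elementary bound $c(S)\leq 2\sum_{u\in R}c(S^*_u)$ to yield an $O(\gm)$-approximation on $B$.

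The main -- and essentially only -- step to check is that the above thresholds are valid, i.e., that $t_u$ does not depend on any cost owned by the agent owning $u$. For $u\in R\cap\T_i$, $t_u$ uses only $c_v$ with $v\in N_B(u)\sse N_G(u)$, and monopoly-freeness (each $\T_i$ is independent in $G$) excludes $v\in\T_i$. For $v\in T\cap\T_i$, $t_v$ uses $c_u$ with $u\in N_B(v)\sse N_G(v)$ (again safe by monopoly-freeness) and $c_w$ with $w\in N_B(u)\sm\{v\}$; since the $B$-edges $(u,v)$ and $(u,w)$ arise from $G$-edges, the underlying originals of $v$ and $w$ both lie in $N_G(u)$, and the 3-hop-far condition $|N_G(u)\cap\T_i|\leq 1$ together with $v\in\T_i$ forces $w\notin\T_i$. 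The copy-splitting of a node owned by agent $i$ is harmless because the two copies $u_1\in T$, $u_2\in R$ are non-adjacent in $B$ (as $G$ has no self-loops), so neither $t_{u_1}$ nor $t_{u_2}$ references the other's cost. The extension to non-disjoint $\T_i$s then follows exactly as in the corresponding step at the end of the proof of Theorem~\ref{mclosed} (appealing to Remark~\ref{genvcp}), since the per-node threshold only involves costs at $B$-distance at most two, all of which are disjoint from the agents owning that node by the arguments just given.
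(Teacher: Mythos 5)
Your proposal is correct and follows essentially the same route as the paper: run the construction of Theorem~\ref{mclosed} unchanged and observe that the 3-hop-far condition makes the property required of the $Z^j$ subgraphs hold for the bipartite graph $B$ itself, so the random subsampling (and hence the factor $r$) can be dropped, giving $O(\gm)$ for $B$ and for each $H_q$ and thus $O(\gm\log|V|)$ overall. Your additional explicit write-out of the thresholds and the verification that they never reference a cost owned by the same agent (via monopoly-freeness and $|N_G(u)\cap\T_i|\leq 1$) simply fills in details the paper leaves implicit.
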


\begin{proof}
The proof follows from that of Theorem~\ref{mclosed}. The only change is that we no longer
need to decompose the bipartite graph $B$ into the $Z^j$ subgraphs: since the input is a
3-hop-far \mvcp instance, the \mvcp instance on $B$ already satisfies the property
required of the $Z^j$ graphs. 
Thus, we obtain an $O(\gm)$-approximation for $B$, and an $O(\gm)$-approximation
for each $H_q$, and hence an $O(\gm\log |V|)$-approximation for $G$. The consequences when 
the $\T_i$s are not necessarily disjoint, and for a proper minor-closed family of graphs
follow as in the proof of Theorem~\ref{mclosed}.
\end{proof}

\paragraph{Frugality considerations.}
Karlin et al.~\cite{Karlin} and Elkind et al.~\cite{Elkind1} propose various benchmarks
for measuring the {\em frugality ratio} of a mechanism, which is a measure of the
(over-)payment of a mechanism. The mechanisms that we devise above also enjoy good
frugality ratios with respect to the following benchmark introduced by~\cite{Elkind1},
which is denoted by $\nu(G,c)$ in~\cite{KempeFOCS} (and NTU$_{\max}$ in~\cite{Elkind1}).  

\begin{definition}[Frugality benchmark \boldmath $\nu(G,c)$~\cite{Karlin,Elkind1}] 
\label{frbench} 
Given an instance of \vcp on a graph $G=(V,E)$ with node costs $\{c_u\}$, we define
$\nu(G,c)$ as follows.
Fix an arbitrary min-cost vertex cover $S$ (with respect to $c$).%
\footnote{Elkind et al.~\cite{Elkind1} prove that $\nu(G,c)$ does not depend on the
  specific min-cost vertex cover $S$ used in the definition.}   
\begin{alignat*}{3}
\nu(G,c)\ :=\ \max & \quad & \sum_{v\in S}x_v & \\
\text{s.t.} && x_v & \geq c_v \qquad && \frall v\in S \\
&& \sum_{v\in S\setminus T}x_v & \leq \sum_{v\in T\setminus S}c_v \qquad && 
\frall\ \text{vertex covers $T$.}
\end{alignat*}
\end{definition}

The {\it frugality ratio} of a mechanism $M=\bigl(\A,\{p_i\}\bigr)$ on $G$
is defined as $\phi_M(G):=\sup_c\frac{\sum_i p_i(c)}{\nu(G,c)}.$
The proof of Lemma~\ref{x-scaled theorem} is easily modified to show that the $x$-scaled 
mechanism $\A_x$ satisfies $\sum_i p_i(c)\leq\sum_u t_u\leq\beta(G;x) c(V)$, where 
$\beta(G;x)=\max_{u\in V}\frac{x(N(u))}{x_u}$. Since~\cite{Elkind1} show that 
$\nu(G,c)\geq c(V)/2$, this implies that $\phi_{\A_x}(G)\leq 2\beta(G;x)$.   
Also, if $M$ is a decomposition mechanism constructed from
threshold mechanisms $M_1,\ldots,M_k$, where each $M_q$ satisfies 
$\sum_u t^q_u\leq\phi_q\cdot c(V(G_q))$, then it is easy to see that 
$\phi_M(G)\leq 2\sum_{q=1}^k\phi_q$. 
Thus, we obtain the following results.

\begin{theorem} \label{frugality}
Let $G=(V,E)$ be a graph with $n$ nodes. We can obtain a polytime, truthful, IR mechanism 
$M$ with the following approximation $\rho=\rho_M(G)$ and frugality $\phi=\phi_M(G)$ ratios.
\begin{list}{(\roman{enumi})}{\usecounter{enumi} \topsep=0ex \itemsep=0ex 
    \settowidth{\labelwidth}{(iii)}
    \leftmargin=\labelwidth \addtolength{\leftmargin}{1ex}}
\item $\rho=(\beta(G;x)+1)$, $\phi\leq 2\beta(G;x)$ for \mvcp on $G$;
\item $\rho=O(r^2\log n)$, $\phi=O\bigl(r^2\log n\cdot\Dt(G)\bigr)$ for $r$-dimensional \vcp
on $G$ (using a 2-approximation mechanism with frugality ratio $2\Dt(G)$~\cite{Elkind1}
for single-dimensional \vcp in the construction of Theorem~\ref{rdim}); 
\item $\rho, \phi=O(r\gm\log n)$ for $r$-dimensional \vcp on $G$ when $G$ is everywhere
$\gm$-sparse; hence, we achieve $\rho, \phi=O(r\log n)$ for $r$-dimensional \vcp on any
proper minor-closed family.  
\end{list} 
\end{theorem}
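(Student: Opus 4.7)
\begin{proofsketch}
My plan is to combine the two generic bounds stated in the paragraph preceding the theorem—that the $x$-scaled mechanism $\A_x$ satisfies $\sum_u t_u\leq\beta(G;x)c(V)$, and that any decomposition mechanism built from threshold mechanisms $\{M_q\}$ satisfying $\sum_u t^q_u\leq\phi_q c(V(G_q))$ enjoys $\phi_M(G)\leq 2\sum_q\phi_q$—with the lower bound $\nu(G,c)\geq c(V)/2$ from~\cite{Elkind1}, in order to translate sum-of-thresholds estimates for each building-block mechanism into frugality guarantees for the composed mechanism. The three parts are then obtained by choosing the right building blocks.

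For part~(i), I would directly invoke the $x$-scaled mechanism $\A_x$. Lemma~\ref{x-scaled theorem} gives approximation ratio $\alpha(G;x)+1$, and since any $S\sse N(u)$ satisfies $x(S)\leq x(N(u))$ we have $\alpha(G;x)\leq\beta(G;x)$, yielding $\rho\leq\beta(G;x)+1$. Unfolding $t_u=\max_{v\in N(u)}x_uc_v/x_v\leq\sum_{v\in N(u)}x_uc_v/x_v$ and swapping the order of summation gives $\sum_u t_u\leq\beta(G;x)c(V)$; combined with $\nu\geq c(V)/2$ this yields $\phi\leq 2\beta(G;x)$.

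For parts~(ii) and~(iii) I would plug appropriate building blocks into existing decomposition mechanisms. In part~(ii), I would decompose $G$ as in Theorem~\ref{rdim} into $O(r^2\log n)$ single-dim subgraphs and, on each, use the~\cite{Elkind1} truthful 2-approximation, frugality-$2\Dt(G)$ single-dim \vcp mechanism; Lemma~\ref{decomp theorem} gives $\rho=O(r^2\log n)$, and, after verifying the sum-of-thresholds inequality $\sum_u t^q_u\leq O(\Dt(G))c(V(G_q))$ for the~\cite{Elkind1} mechanism, the decomposition-frugality bound yields $\phi=O(r^2\log n\cdot\Dt(G))$. In part~(iii), I would use the decomposition of Theorem~\ref{mclosed}: the $O(\log n)$ induced subgraphs $H_q$ have maximum degree at most $4\gm$, so applying $\A_{\vec 1}$ to each via part~(i) delivers both an $O(\gm)$-approximation and $\sum_u t^q_u\leq 4\gm c(V(H_q))$; the $O(r\log n)$ bipartite pieces $Z^j$ (whose $T$-side has degree at most $4\gm$) are handled by a bounded-degree-side edge-threshold mechanism that again yields an $O(\gm)$-approximation and an $O(\gm)c(V(Z^j))$ sum-of-thresholds bound. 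Summing via the decomposition-frugality inequality over the $O(r\log n)$ pieces then gives $\phi=O(r\gm\log n)$, matching the approximation bound; for any proper minor-closed family $\gm$ is constant, so $\rho,\phi=O(r\log n)$.

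The main obstacle is the case-by-case verification that the building-block threshold mechanisms used in parts~(ii) and~(iii) actually satisfy the \emph{sum-of-thresholds} inequality $\sum_u t^q_u\leq\phi_q c(V(G_q))$ rather than merely a frugality-ratio bound (which controls $\sum_i p^q_i\leq\sum_{u\in S^q}t^q_u$ instead of $\sum_u t^q_u$). This forces inspecting the specific thresholds of the building-block mechanisms directly rather than invoking prior results as black boxes; in particular, on the $Z^j$ pieces of part~(iii), the mechanism must be chosen so as to exploit the bounded-degree side of the bipartition in both the approximation and the sum-of-thresholds analyses, which rules out, e.g., a naive star-based mechanism centered on the possibly high-degree $Y^j$-side.
\end{proofsketch}
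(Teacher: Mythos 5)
Your parts (i) and (ii) follow essentially the paper's own route: the paper's entire proof of Theorem~\ref{frugality} is the paragraph preceding it (the bound $\sum_u t_u\le\beta(G;x)c(V)$, the benchmark bound $\nu(G,c)\ge c(V)/2$ from~\cite{Elkind1}, and the composition rule $\phi_M(G)\le 2\sum_q\phi_q$ under the sum-of-thresholds condition), instantiated with the constructions of Theorems~\ref{rdim} and~\ref{mclosed}; your plan to verify the sum-of-thresholds inequality for the~\cite{Elkind1} single-dimensional mechanism in (ii) is exactly the detail the paper leaves implicit, and it goes through.

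The genuine gap is in part (iii). You rightly observe that the star-based mechanism used on the pieces $Z^j$ in Theorem~\ref{mclosed} does not satisfy $\sum_u t^j_u\le O(\gm)\,c(V(Z^j))$: the threshold of a node $v\in X^j$ is roughly $\max_{u\in N_{Z^j}(v)}\bigl(c_u-\sum_{w\in N_{Z^j}(u)\sm\{v\}}c_w\bigr)$, and summing these is only bounded by $\Dt(G)\cdot c(Y^j)$. But the fix you assert --- a ``bounded-degree-side'' threshold mechanism for $Z^j$ with both an $O(\gm)$ approximation ratio and an $O(\gm)\,c(V(Z^j))$ sum-of-thresholds bound --- does not exist. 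A piece $Z^j$ can be the star $K_{1,m}$ with the center on the $Y^j$ side (take $G=K_{1,m}$ itself, which is everywhere $1$-sparse and planar). On such a piece, any threshold mechanism that always outputs a vertex cover with approximation $\rho'$ and satisfies $\sum_u t_u\le\phi'c(V)$ for every cost vector must have $\rho'\phi'\gtrsim m$: on the instance (leaves $0$, center $C$), finite approximation forces all leaves in and the center out, so some leaf threshold is at most $\phi'C/m$; raising that leaf's cost just above its (unchanged) threshold forces the center into the output, giving approximation at least about $m/\phi'$. So both quantities cannot be $O(\gm)$ once the $Y^j$-side degree exceeds $\gm^2$. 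In fact the same two-instance argument applied to $G=K_{1,m}$ as a whole (picked leaves are paid at least their thresholds by truthfulness plus IR, and $\nu=C$ on the first instance) shows that \emph{every} truthful, IR mechanism with approximation $\rho$ and frugality $\phi$ with respect to $\nu$ satisfies $\rho\phi=\Omega(n)$ on this $\gm=1$, $r=1$ instance, so the simultaneous guarantee $\rho,\phi=O(r\gm\log n)$ of part (iii) cannot be attained by any choice of building block; what the paper's own construction actually yields is $\phi=O\bigl(r\log n\,(\gm+\Dt(G))\bigr)$, i.e., a $\Dt$-dependence as in part (ii). The right resolution of your ``main obstacle'' is therefore not a cleverer mechanism for the $Z^j$ pieces but a weakening of the frugality claim in (iii).
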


\appendix

\section{Proof of Lemma~\ref{nbredgemap}} \label{append-vc}
Statements (i) and (ii) follow from the statement for general graphs and the
graph-theoretic facts mentioned before Lemma~\ref{nbredgemap}, so we focus on 
proving the statement for an arbitrary graph $G$. Let $\al'=\al'(G)$. 

Consider an arbitrary vertex $v\in V$. For any $u\in N(v)$ define
$x_v^{(uv)}:=\inf \{\sigma\geq 0: t_u(c_v=\beta,c_{-v}=\vec{0})\geq 1 \ \ 
\forall \beta\geq \sigma\}$.

\noindent
{\bf Claim 1}: $x_v^{(uv)}<\infty$. If not, then for any
$p>0$, there exists $q\geq p$ such that
$t_u(c_v=q,c_{-v}=\vec{0})< 1$. So, let $p=\rho$ and $q\geq p$ be such
that $t_u(c_v=q,c_{-v}=\vec{0})< 1$. Consider the cost vector $c$ where $c_u=1$, $c_v=q$,
and $c_z=0$ for $z\neq u,v$, we see that the approximation ratio $\rho$
is contradicted for the instance $(G,c)$ (i.e., graph $G$ with the cost
vector $c$): $V\setminus v$ is an optimal vertex cover of cost $1$ but the threshold
mechanism does not choose $u$ so it chooses $v$ as it is feasible and incurs cost $q>\rho$.

\noindent
{\bf Claim 2}: $x_v^{(uv)}>0$. If $x_v^{(uv)}=0$, then
similar to the above, by considering $c$ where $c_u=1$, $c_v=\e$, $c_z=0$ for $z\neq u,v$,
where $\e$ is very small, we see that $M$ outputs $u$, which means $M$ does not
have the approximation ratio $\rho$.

Now orient the edges of $G$ according to the orientation that determines $\al'(G)$ to
obtain the directed graph $D$. For any arc $(u,v)$ in $D$, consider
linear edge-threshold functions $t_{v}^{(uv)}(c_{u})=
x^{(uv)}_{v}c_{u}$, and $t_{u}^{(uv)}(c_{v})=
(1/x^{(uv)}_{v})c_{v}$. Using these edge-thresholds we
obtain an edge-threshold mechanism $M'$. $M'$ is feasible since for any
arc $(u,v)$ if $u$ is not chosen by $M'$, we should have $c_u>t_{u}^{(uv)}(c_{v})=
(1/x^{(uv)}_{v})c_{v}$ which implies $t_{v}^{(uv)}(c_{u})=
x^{(uv)}_{v}c_{u}>c_v$ hence $v$ is chosen by $M'$.

\begin{figure}[htbp]
\centerline{\includegraphics{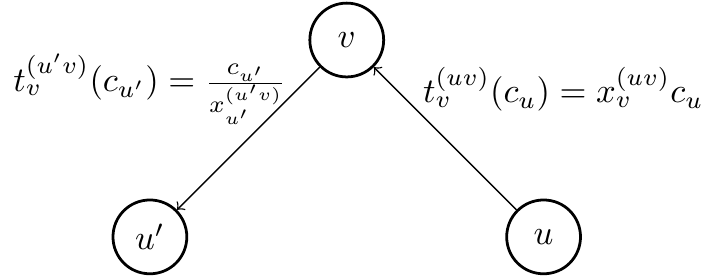}} \label{threshold fig}
\end{figure}

We assert that $M'$ has approximation ratio $O(\rho
\log(\alpha'))$. Note that if $T$ is the outcome of $M'$ and
$T^*$ is the optimal outcome, then we have
\begin{eqnarray*}
c(T) & = & c(T\cap T^*)+c(T\setminus T^*) \leq 
c(T^*)+\sum_{w\in T\setminus T^*}\max_{u\in N(w)}t_w^{(uw)}(c_u) \\
& \leq & c(T^*)+\sum_{w\in T\setminus T^*}\sum_{u\in N(w)}t_w^{(uw)}(c_u) 
 = c(T^*)+\sum_{\substack{w\in T\setminus T^*\\ u\in N(w)}}c_ut_w^{(uw)}(1) \\ 
& = &
c(T^*)+\sum_{u\in T^*} c_u\sum_{w\in N(u)\cap (T\setminus T^*)}t_w^{(uw)}(1) \qquad
(\text{since $N(w)\subseteq T^*$ for $w \notin T^*$})
\end{eqnarray*}

Note that $T\setminus T^*$ is an independent set, so it suffices to show for any $u\in
V(G)$, 
if $S\subseteq N(u)$ forms an independent set then $\sum_{w\in
S}t_w^{(uw)}(1)\leq \rho(\log(\alpha')+2)$.

Let $\delta^{out}(u)=\{v: (u,v)\in D\}$,
$S_1:=S\cap\delta^{out}(u)$, and $S_2:=S\setminus S_1$. So, we have
\begin{align}
\sum_{w\in S}t_w^{(uw)}(1)=\sum_{w\in S_1}t_w^{(uw)}(1)+\sum_{w\in
S_2}t_w^{(uw)}(1)=\sum_{w\in S_1}x_w^{(uw)}+\sum_{w\in S_2}\tfrac{1}{x_u^{(uw)}} \label{three}
\end{align}

Choose an arbitrary $w\in S_1$. By definition of $x_w^{(uw)}$,
for every $\epsilon_w \geq 0$, there is some $0\leq\dt_w\leq\e_w$ such that
$t_u(c_w=x_w^{(uw)}-\epsilon_w+\dt_w,\vec{0})<1$. Hence, $u\notin
M(G,\hat{c})$ where $\hat{c}_w=x_w^{(uw)}-\epsilon_w+\dt_w$, $\hat{c}_u=1$, and
$\hat{c}_z=0$ otherwise. So, since $M(G,\hat{c})$ is a vertex
cover, we should have $w\in M(G,\hat{c})$ which means
$t_w(c_u=1,\vec{0})\geq x_w^{(uw)}-\epsilon_w+\dt_w$. 
Thus, as $S_1$ is an independent set, for the cost vector $c'$ where $c'_u=1$,
$c'_w=x_w^{(uw)}-\epsilon_w+\dt_w$ if $w\in S_1$, and $c'_z=0$
otherwise, we have $S_1\subseteq M(G,c')$ (since $t_w(c'_{N(w)})=t_w(c_u=1,\vec{0})$). 
Letting $\e_w$ tend to 0, we get that $\rho\geq \sum_{w\in S_1}x_w^{(uw)}$, as $V\setminus
N(u)$ is a vertex cover of cost $1$.

Let $S_2=\{v_1,\ldots,v_k\}$ where $x_u^{(uv_1)}\leq
x_u^{(uv_2)}\leq \ldots \leq x_u^{(uv_k)}$.
Consider $c''$ where $c''_u=x_u^{(uv_q)}$, $c''_z=1$ if 
$z\in S_2$, and $c''_z=0$ otherwise. Then, $\{v_1,\ldots,v_q\}\subseteq
M(G,c'')$ hence $\rho \geq q/x_u^{(uv_l)}$ for each
$q=1,\ldots,k$. So, $\sum_{q=1}^k\frac{1}{x_u^{(uv_q)}}\leq
\sum_{q=1}^k\rho/q\leq \rho (\log(|S_2|)+1)\leq \rho
\log(\alpha')+\rho$. Therefore, (\ref{three}) gives
$$
\sum_{w\in S}t_w^{(uw)}(1)\leq \rho + \rho \log(\alpha')+\rho = \rho(\log(\alpha')+2).
\hspace{1.5in} \qedsymbol \hspace{-1.5in}
$$

\section{LP-rounding does not work for \mvcp} \label{lpround-bad}
A common method for designing approximation algorithms for \vcp (and in general) is to 
solve the following LP-relaxation and then round the optimal solution.
\begin{gather}
\min \quad \sum_v c_vx_v \qquad \text{s.t.} \qquad x_u+x_v \geq 1 \quad \forall (u,v)\in E. 
\tag{VC-P} \label{vc-lp} 
\end{gather}
We show that {\em any} LP-rounding algorithm that always includes nodes with
$x_u\geq\frac{1}{2}$ and does not include any node $u$ with $x_u=0$ is not WMON. 

\begin{example} \label{lpround}
Consider the graph $G$ shown below where $u$ and $v$ belong to agent 1. 
For the cost-vector $(c_u,c_a,c_b,c_v,c_d)=(5/4,1,1,1,1)$, the unique optimal solution to
the LP is $(x_u,x_a,x_b,x_v,x_d)=(1/2,1/2,1/2,1/2,1/2)$. Therefore, the algorithm includes
both $u$ and $v$ in the output.
  
\begin{figure}[ht!]
\centerline{\includegraphics{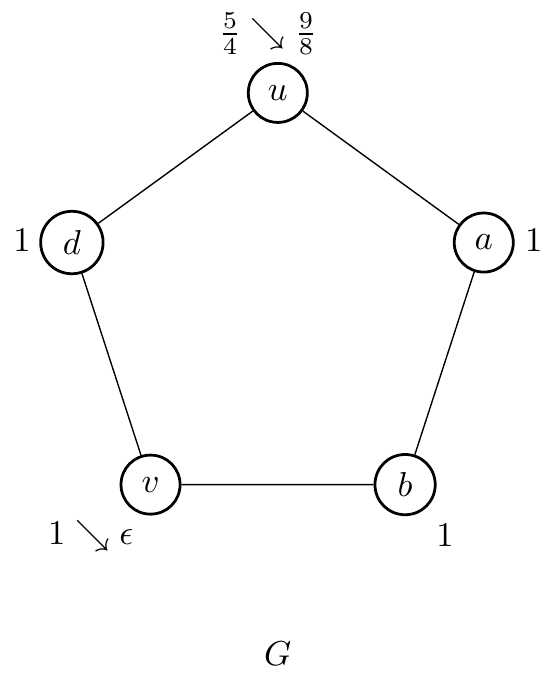}} \label{LP fig}
\end{figure}
 
Consider the cost vector $c'=(c'_1,c_{-1})$ where agent 1 reduces the costs for $u$ and
$v$ to $c'_u=9/8$ and $c'_v=\epsilon < 1/16$ (all other costs are unchanged). Then WMON
dictates that both $u$ and $v$ must still be 
chosen. 
However, the unique optimal solution to the LP with the new costs is
$x_a=x_d=x_v=1,\ x_u=x_b=0$ with cost $2+\epsilon$. (This follows because if $x_u=1$ then
the cost of an LP solution is at least $1+9/8$; if $x_u=1/2$, then the cost of an LP
solution is at least $9/16+1+1/2$; both are greater than $2+\epsilon$ as $\epsilon <
1/16$.) 
So $M$ will not output $u$, which contradicts WMON.
\end{example}

The above example also shows that the following well-known combinatorial 2-approximation
algorithm for \vcp does not satisfy WMON: Given a graph $G=(V,E)$, construct a bipartite
graph $G'$ having two copies of $V$, say $V_1, V_2$, and having edges $(u_1,v_2),
(u_2,v_1)$ for every edge $(u,v)\in E$; solve \vcp on $G'$ and if any of the copies of a
node are chosen in this solution, then pick that node in the solution for $G$. 

In the above example, for the cost-vector $c$, every optimal vertex cover for $G'$
includes exactly one copy of $u$ and one copy of $v$, so both $u$ and $v$ will be chosen
in the solution for $G$. For the cost-vector $c'$, no optimal vertex cover for $G'$
includes any copies of $u$, so $u$ will not be chosen in the solution for $G$. This
contradicts WMON.

\section{Primal-dual methods do not work for \mvcp} \label{other-bad}
The dual of \eqref{vc-lp} is as follows.
\begin{gather}
\max \quad \sum_e y_e \qquad \text{s.t.} \qquad \sum_{e\in\dt(v)}y_e \leq c_v \quad
\forall v\in V.  
\tag{VC-D} \label{vc-dp} 
\end{gather}
Various primal-dual algorithm based on dual ascent are known to yield 2-approximation
algorithms. All of these start with $y=\vec{0}$, raise dual variables while
maintaining dual feasibility, and return the nodes whose costs are completely ``paid''
by the dual variables. 

The two most common variants are where one fixes an ordering of the edges in which to
raise dual variables, and where one raises all (unfrozen) dual variables
simultaneously. We show that neither of these lead to WMON algorithms.
 
\begin{example} \label{pd1}
Consider the graph shown in Fig.~\ref{pd1fig}, where the dual variables are increased in
the order $ux, xy, yv$, and $u$ and $v$ belong to one agent. 

Let $c_u=1,\ c_x=1.5,\ c_y=1.05,\ c_v=0.5$. The primal-dual algorithm will output
$\{u,x,v\}$. Now, if we reduce $c_u$ to $0.5$ and $c_v$ to $0.3$, and keep $c_x$ and $c_y$ 
unchanged, the algorithm outputs $\{u,x,y\}$ which contradicts WMON.

\begin{figure}[ht!]
\centerline{\includegraphics{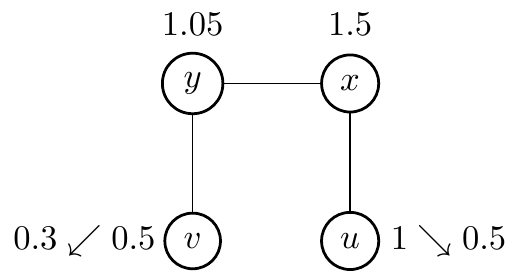}} 
\caption{\label{pd1fig}}
\end{figure}
\end{example}

\begin{example} \label{pd2}
Now consider the simultaneous-dual-ascent primal-dual algorithm.
Consider again the same graph as in Example~\ref{pd1} but with a different assignment of
costs, as shown in Fig.~\ref{pd2fig}. Let $c_u=1,\ c_x=3,\ c_y=4.6,\ c_v=2.5$. The
primal-dual algorithm outputs $\{u,x,v\}$. Now, if we reduce $c_u$ to $0.5$ and $c_v$ to
$2.4$ and keep $c_x$ and $c_y$ unchanged, the algorithm outputs $\{u,y\}$, which
contradicts WMON. 

\begin{figure}[htbp]
\centerline{\includegraphics{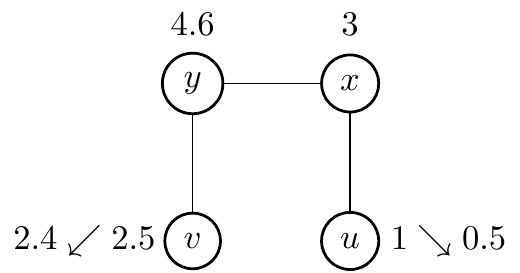}} 
\caption{\label{pd2fig}}
\end{figure}
\end{example}

\end{document}